\newtheorem{thm}{Theorem}[section]
 \newtheorem{lem}[thm]{Lemma}
 \newtheorem{prop}[thm]{Proposition}
 \theoremstyle{definition}
 \theoremstyle{remark}
 \newtheorem{rem}[thm]{Remark}
 \newtheorem{exam}[thm]{Example}
 \numberwithin{equation}{section}
\newcommand{\cB}{\mathcal{B}}
\newcommand{\cC}{\mathcal{C}}
\newcommand{\cD}{\mathcal{D}}
\newcommand{\cF}{\mathcal{F}}
\newcommand{\cG}{\mathcal{G}}
\newcommand{\cH}{\mathcal{H}}
\newcommand{\cX}{\mathcal{X}}
\newcommand{\sA}{\mathscr{A}}
\newcommand{\sC}{\mathscr{C}}
\newcommand{\sF}{\mathscr{F}}
\newcommand{\sI}{\mathscr{I}}
\newcommand{\sL}{\mathscr{L}}
\newcommand{\RR}{\mathbb{R}}
\newcommand{\CC}{\mathbb{C}}
\newcommand{\EE}{\mathbb{E}}
\newcommand{\NN}{\mathbb{N}}
\newcommand{\PP}{\mathbb{P}}
\newcommand{\eps}{\varepsilon}
\newcommand{\om}{\omega}
\newcommand{\one}{{\mathbbm{1}}}
\newcommand{\half}{\frac{1}{2}}
\newcommand{\rd}{{\rm d}}
\newcommand{\lk}{\left(}
\newcommand{\rk}{\right)}
\newcommand{\add}{a^\dagger}
\newcommand{\hf}{H_{\rm f}}
\newcommand{\hp}{H_{\rm p}}
\newcommand{\hn}{H}
\newcommand{\gr}{\Phi_{\rm g}}
\newcommand{\eg}{E_{\rm g}}
\begin{document}

\title[Two-sided bounds on ground states in the Nelson model]{Two-sided bounds on the point-wise spatial decay of ground states in the 
renormalized Nelson model with confining potentials}

\author{Fumio Hiroshima and Oliver Matte}

\date{\today}

\address{Fumio Hiroshima, Faculty of Mathematics, Kyushu University, Fukuoka Motooka 744, 819-0395, Japan}
\email{hiroshima@math.kyushu-u.ac.jp}

\address{Oliver Matte, Institut for Matematiske Fag, Aalborg Universitet, Thomas Manns Vej 23, 9220 Aalborg {\O}, Denmark}
\email{oliver@math.aau.dk}

\maketitle

\setlength{\baselineskip}{13pt}

\begin{abstract}
\noindent
We study the renormalized Nelson model for a scalar matter particle in a continuous confining potential
interacting with a possibly massless quantized radiation field.
When the radiation field is massless we impose a mild infrared regularization ensuring that the Nelson Hamiltonian
has a non-degenerate ground state in all considered cases. Employing Feynman--Kac representations, 
we derive lower bounds on the point-wise spatial decay of the partial
Fock space norms of ground state eigenvectors. Here the exponential rate function governing the decay is given by the Agmon distance familiar from
the analysis of Schr\"{o}dinger operators. For a large class of confining potentials, our lower bounds on the decay
of ground state eigenvectors match asymptotically with the upper bounds implied by previous work of the present authors.
\end{abstract}


\section{Introduction and main result}\label{secintro}

\subsection{General introduction}\label{ssecgenintro}

\noindent
In this article we discuss upper and lower bounds on the point-wise spatial decay of 
ground state eigenvectors of the renormalized Nelson Hamiltonian in quantum field theory.
This Hamiltonian was constructed by E.~Nelson \cite{Nelson1964} in 1964. 
It generates the dynamics of a non-relativistic scalar quantum particle interacting with a quantized radiation field comprised
of scalar bosons. The crucial feature of Nelson's model is that, by an energy renormalization,
one can remove an artificial ultraviolet cutoff imposed in the matter-radiation interaction term for a start to render 
heuristic physical expressions mathematically well-defined. Due to its more involved construction---see 
also \cite{GriesemerWuensch2018,GHL14,LampartSchmidt2019,MatteMoeller2018,Posilicano2020,Schmidt2021} 
for improvements and approaches alternative to \cite{Nelson1964}---the spectral and probabilistic analysis of the renormalized Nelson model becomes more 
challenging compared to similar models with ultraviolet regularized matter-radiation interactions.

In a previous article \cite{HiroshimaMatte2022}, the present authors proved the existence of ground state eigenvectors of 
renormalized Nelson Hamiltonians with Kato-decomposable potentials assuming a binding condition as well as a mild
infrared regularity condition in case the boson mass is zero. For massless boson fields the latter infrared condition was also shown to be necessary.
Furthermore, upper bounds on the point-wise spatial decay of the partial Fock space norms of ground state eigenvectors were
obtained in \cite{HiroshimaMatte2022}, for potentials leading to a finite localization threshold as well as for a class of confining potentials
increasing at least polynomially at infinity. The ground state eigenvectors found in \cite{HiroshimaMatte2022} have representatives depending continuously on
the position variable of the matter particle, and it is the point-wise decay of these representatives that we address here and henceforth.

In the present article we complement these results by deriving lower bounds on the point-wise decay of ground state eigenvectors for continuous confining potentials.
As many of the results in \cite{HiroshimaMatte2022}, the proofs of our lower bounds are based on Feynman--Kac representations of the semigroup
generated by the renormalized Nelson Hamiltonian \cite{GHL14,MatteMoeller2018}. We shall proceed along the lines developed by
R.~Carmona and B.~Simon in the theory of Schr\"{o}dinger operators \cite{CS81}, adding new arguments to deal with the radiation field. 
In particular, we shall estimate the decay in terms of the Agmon distance employed in \cite{CS81}, i.e., the geodesic distance to the origin
arising when $\RR^3$ is equipped with the Riemannian metric tensor $g_x(v,w):=2V(x)v\cdot w$ at every $x\in\RR^3$; here $v,w\in\RR^3$ are tangent vectors
and $V$ is the (positive) potential. $g_x$ is a tunneling region analog of the Jacobi metric in classical mechanics. In fact, in the lower bounds
valid for the most general potentials considered here, $V(x)$ has to be replaced by $V_\circ(x):=\sup_{|y|\leq1}V(x+y)$ in the definition of $g_x$, but in many examples
one can return to the original $V$.
The Agmon distance is encountered as an exponential rate function by minimizing the classical action functional associated with $-V_\circ$,
which in turn naturally appears through the Girsanov theorem when the integration over paths in the Feynman--Kac formula for ground states 
is restricted to an integration over a neighborhood of an action minimizing geodesic~\cite[Section~2]{CS81}. 
As mentioned in \cite{CS81}, Varadhan \cite[(2.8)]{var67} 
seems to be the first to use similar geodesic ideas in estimating transition probabilities of diffusion processes.

There is a wealth of literature on decay estimates on bound states of Schr\"odinger type operators; here we only refer to
the books~\cite{Agmon1982,HS96} and mention that probabilistic approaches alternative to \cite{CS81} can be found in \cite{AS82, car78,CMS90}.
Notably, Agmon's distance has generalizations to $N$-body Schr\"{o}dinger operators \cite{Agmon1982,CS81}.

Prior to \cite{HiroshimaMatte2022},
point-wise upper bounds on the fall-off of ground state eigenvectors and more general elements of spectral subspaces
in ultraviolet regularized models of non-relativistic quantum field theory appeared in the articles \cite{BHLMS01,HiHi2010,Hiroshima2019,Matte2016},
which all rely on probabilistic techniques. Hence, the main novelty of \cite{HiroshimaMatte2022} was that no 
ultraviolet regularizations were necessary any longer. The recent preprint \cite{GriesemerKussmaul2024}
contains short proofs of $L^2$ to $L^\infty$ bounds for ultraviolet regularized models based on subsolution estimates, 
which permit to turn $L^2$-exponential localization estimates for ground states \cite{BFS1998b,Griesemer2004,Panati2009} into point-wise bounds.
The point-wise bounds in \cite{Matte2016} actually apply to any finite number of matter particles.

Also lower point-wise bounds on the partial Fock space norms of ground state eigenvectors in 
models involving quantum fields appeared earlier: for the ultraviolet regularized Nelson model such a bound can be found in \cite[Theorem 2.93]{HL20}, 
and for the semi-relativistic Pauli--Fierz model in \cite{hir14}. Hence, again, the first novel aspect of the results obtained here
is the absence of ultraviolet regularity conditions. In this regard, a technical key input from \cite{MatteMoeller2018} are exponential moment estimates
on processes appearing in the Feynman--Kac formula, whose right hand sides grow at most log-linearly in time.
Besides, we find the first lower bounds on eigenvectors in models involving quantized fields that are expressed in terms of the 
aforementioned Agmon distance, and for a large class of continuous, confining potentials we demonstrate that point-wise upper bounds 
involving the same Agmon distance can be read off from \cite{HiroshimaMatte2022}.

In particular, the latter two-sided bounds match asymptotically and reveal the same leading order behavior
as found in \cite{CS81} for Schr\"{o}dinger operators; see Remark~\ref{remasympmatch}. This is because, for confining potentials,
the eikonal equation determining the Agmon distance near infinity, $|\nabla \varrho|^2=2V$, is the same for Schr\"{o}dinger operators and
Nelson Hamiltonians. For decaying potentials, the finite binding energy of a ground state replaces $V$ in the eikonal equation, and
since binding energies are strictly increased by adding the field theoretic terms to the Schr\"{o}dinger operator \cite{KoenenbergMatte2013}, this results in an enhanced
decay of ground state eigenvectors as reflected by upper bounds in the $L^2$-sense in \cite{Griesemer2004,HiroshimaMatte2022,Panati2009}
and point-wise upper bounds in \cite{GriesemerKussmaul2024,HiroshimaMatte2022,Matte2016}.
Deriving asymptotically matching upper and lower bounds for decaying potentials in non-relativistic quantum field theory seems to be a more difficult problem.

We remark that other types of ground state localizations can be studied at every point in space, like exponential localization at any rate
in the boson number or Gaussian localization with respect to bosonic field variables \cite{HiroshimaMatte2022}; see the brief
summaries in Remarks~\ref{remotherloc1} and~\ref{remotherloc2}.

In the following two subsections we introduce the Nelson model and present our main results.
In preparation for the proof of our lower bounds, we explain the Feynman--Kac formula for the renormalized Nelson model as well as
some of its implications relevant to us in Section~\ref{secFK}. In that section we also extract upper point-wise bounds involving
Agmon distances from \cite{HiroshimaMatte2022}, and we give a proof of the continuity of functions in the range
of semigroup elements, which is a good deal shorter than the proof of a stronger equicontinuity statement in \cite{Matte2016}.
Section~\ref{secproof} contains the proof of our lower bounds.
Some well-known results on the Agmon distance of particular importance to us are explained in Appendix~\ref{appdist}.


\subsection{The Nelson Hamiltonian}\label{ssecNelsonHam}

With $\cF$ denoting the bosonic Fock space over $L^2(\RR^3)$ (see \eqref{defFock}), the total Hilbert space for the matter-radiation system treated here is 
\begin{align*}
\cH:=L^2(\RR^3,\cF).
\end{align*} 
While we eventually will restrict our attention to continuous confining exterior potentials
$V:\RR^3\to\RR$, we shall for a start and in some technical results merely assume $V$ to be Kato-decomposable.
That is, $V$ is measurable, $V_-:=\max\{-V,0\}$ satisfies
\begin{align}\label{Katoclass}
\lim_{r\downarrow0}\sup_{x\in\RR^3}\int_{\RR^3}\one_{[0,r)}(|x-y|)\frac{V_-(y)}{|x-y|}\rd y&=0,
\end{align}
and the positive part $V_+:=V+V_-$ satisfies this condition locally, i.e., \eqref{Katoclass} holds with
$\chi_KV_+$ put in place of $V_-$ for any compact $K\subset\RR^3$. Under these conditions we can define
a Schr\"{o}dinger operator $\hp$ (particle Hamiltonian) acting in $\cH$ as the semi-bounded form sum
\begin{align*}
\hp:=-\half\Delta\dot{+}V.
\end{align*}
Here both $\Delta$ and $V$ act on $\cF$-valued functions on $\RR^3$ in the canonical way.

Turning to the quantum field, the bosons, here described in momentum space, have the relativistic dispersion relation
\begin{align*}
\om(k)&:=\sqrt{|k|^2+\nu^2},\quad k\in\RR^3,\quad\text{for some $\nu\ge0$.}
\end{align*}
In what follows $\hf:=\rd\Gamma(\omega)$ denotes the differential second quantization of the multiplication operator induced by $\om$ (see \eqref{defhf}).
The heuristic interaction kernel for the matter-radiation interaction is given by choosing $\Lambda=\infty$ and $\eta=1$ in
\begin{align}\label{defxexk}
v_{\Lambda,x}(k)&:= g\frac{e_x(k)\eta(k)}{\sqrt{\om(k)}}\one_{[0,\Lambda)}(|k|),\quad e_x(k):=e^{-i k\cdot x},\qquad k,x\in\RR^3,\,\Lambda\in(0,\infty].
\end{align}
Here the value of the coupling constant $g\in\RR$ can be chosen arbitrarily. The measurable function $\eta:\RR^3\to\RR$ is assumed to satisfy
\begin{align*}
0\leq\eta(k)\leq1,\quad \eta(k)=\eta(-k), \quad k\in\RR^3.
\end{align*}
It implements an infrared regularization if necessary, see \eqref{IRcond} below, and ensures that ultraviolet regularized versions of Nelson's model
are covered by our results as well. 
Whenever $v_{\Lambda,x}$ is square-integrable, the corresponding field operator $\varphi(v_{\Lambda,x})$ acting in $\cF$ is well-defined (see \eqref{defavp}).
Hence, we can define ultraviolet regularized Nelson Hamiltonians $H_{\Lambda}$ with $\Lambda\in(0,\infty)$ setting
\begin{align*}
(H_{\Lambda}\Psi)(x)&:= (\hp\Psi)(x)+\hf\Psi(x)+\varphi(v_{\Lambda,x})\Psi(x)+E_{\Lambda}\Psi(x),\quad\text{a.e. $x\in\RR^3$,}
\end{align*}
for all $\Psi\in\cD(H_{\Lambda})=\cD(\hp)\cap L^2(\RR^3,\cD(\hf))$.
Here and henceforth, $\cD(\cdot)$ denotes domains of definition, and we added right away the renormalization energies
\begin{align}\label{defEren}
E_{\Lambda}:=g^2\int_{\RR^3}\frac{\eta(k)^2\one_{[0,\Lambda)}(|k|)}{\om(k)(\om(k)+|k|^2/2)}\rd k,\quad \Lambda\in(0,\infty).
\end{align}
Note that $E_{\Lambda}$ diverges logarithmically as $\Lambda\to\infty$ when $g\not=0$ and $\eta(k)=1$ for large $|k|$.
By the relative bound \eqref{rbvp} and the Kato-Rellich theorem $H_{\Lambda}$ is selfadjoint for every $\Lambda\in(0,\infty)$.
Nelson's result \cite{Nelson1964} (and later improvements covering massless bosons and revealing stronger
convergence properties \cite{Ammari2000,GriesemerWuensch2018,GHL14,HiroshimaMatte2022,LampartSchmidt2019,MatteMoeller2018,Posilicano2020,Schmidt2021}) 
ensures the existence of the renormalized Nelson Hamiltonian
\begin{align}\label{N3}
\hn:=\lim_{\Lambda \to\infty} H_{\Lambda}  \quad\text{in the norm resolvent sense.}
\end{align}
$H$ is semi-bounded from below and we denote the minimum of its spectrum by
\begin{align*}
E_{\rm g}&:=\inf\sigma(\hn).
\end{align*}
Our analysis departs from the next proposition, which is a special case of Theorem~1.4 in our previous article \cite{HiroshimaMatte2022}; 
the existence of ground states of Nelson Hamiltonians with ultraviolet cutoffs has been shown earlier in \cite{ger00,Panati2009,spo98}. 
Note that \eqref{IRcond} is always fulfilled if $\nu>0$.

\begin{prop}\label{propexgr}
Assume that $V(x)\to\infty$ as $|x|\to\infty$.
Further, assume the following infrared regularity condition to hold:
\begin{align}\label{IRcond}
\int_{\RR^3}\one_{(0,1)}(|k|)\frac{\eta(k)^2}{\omega(k)^3}\rd k<\infty.
\end{align}
Then $E_{\rm g}$ is a non-degenerate eigenvalue of $H$.
\end{prop}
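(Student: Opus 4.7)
The plan is to prove existence by an infrared regularization and weak compactness argument, and then derive non-degeneracy from a Perron--Frobenius type argument powered by the Feynman--Kac formula. Since $H$ is already constructed as a norm resolvent limit in \eqref{N3}, I would \emph{not} start the approximation at the ultraviolet level, but rather introduce an auxiliary infrared cutoff on top of the renormalized operator. Concretely, for $\sigma\in(0,1)$ let $\eta_\sigma(k):=\eta(k)\one_{\{|k|\ge\sigma\}}$ and let $H^\sigma$ be the renormalized Nelson Hamiltonian associated with $\eta_\sigma$, so the effective infrared behavior of $H^\sigma$ is regular and its ground state energy $E_{\rm g}^\sigma$ satisfies $E_{\rm g}^\sigma\to E_{\rm g}$ as $\sigma\downarrow 0$. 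Because $V$ is confining, $\hp$ has compact resolvent, and by a standard HVZ/compactness argument one obtains a normalized ground state $\Phi^\sigma$ of $H^\sigma$ for every $\sigma>0$.

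Next I would establish three uniform estimates on the family $\{\Phi^\sigma\}_\sigma$. First, a pull-through identity (or its rigorous Feynman--Kac surrogate) combined with \eqref{IRcond} gives a uniform bound $\langle\Phi^\sigma,\hf\Phi^\sigma\rangle+\langle\Phi^\sigma,\rd\Gamma(\one)\Phi^\sigma\rangle\le C$; the IR assumption \eqref{IRcond} is used here precisely to control the boson number by $\int|\eta|^2/\omega^3\,\rd k$. Second, because $V(x)\to\infty$, an Agmon-type exponential spatial localization yields a uniform estimate $\int e^{\alpha|x|}\|\Phi^\sigma(x)\|_\cF^2\,\rd x\le C$ for some $\alpha>0$ and all small $\sigma$. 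Third, by the norm resolvent convergence in \eqref{N3} together with the corresponding convergence as $\sigma\downarrow 0$, one has $H^\sigma\to H$ in norm resolvent sense. Extracting a weakly convergent subsequence $\Phi^{\sigma_n}\rightharpoonup\Phi$ in $\cH$, the first two uniform estimates combined with the Rellich--Kondrachov theorem (in $x$) and the compactness of the embedding of $\cD(\rd\Gamma(\one))\cap\cD(\hf)$ into $\cF$ on fixed number sectors (in the field variables) upgrade this to strong convergence, so $\|\Phi\|=1$ and $H\Phi=E_{\rm g}\Phi$.

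For non-degeneracy I would invoke the Feynman--Kac representation of $e^{-tH}$ referenced in the introduction (from \cite{MatteMoeller2018,GHL14} and outlined later in Section~\ref{secFK}). Passing to the $Q$-space (Wiener--It\^o) realization of $\cF$, the semigroup $e^{-tH}$ is represented as a path integral whose kernel is a pointwise positive functional of Brownian paths; from this one reads off that $e^{-tH}$ is positivity improving on $L^2(\RR^3,L^2(Q,\mu))$ for every $t>0$. A standard Perron--Frobenius/Faris argument then shows that the eigenspace of $E_{\rm g}$ is spanned by a strictly positive vector and is therefore one-dimensional.

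The main obstacle in this plan is the strong convergence $\Phi^{\sigma_n}\to\Phi$, i.e., ruling out that mass escapes either to spatial infinity or into arbitrarily high bosonic number sectors as the infrared cutoff is removed. Spatial escape is ruled out by the confinement of $V$, but showing that no mass escapes in Fock space crucially requires the quantitative boson number bound derived from \eqref{IRcond}; without that condition the limiting vector could vanish, which is why the proposition fails in the massless case when \eqref{IRcond} is violated. The additional technical difficulty, absent in the ultraviolet regularized setting, is that field operators no longer act pointwise in $x$ through $\varphi(v_{\Lambda,x})$, so the pull-through and number estimates must be performed at the level of the Feynman--Kac representation rather than through naive operator manipulations.
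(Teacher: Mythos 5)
Your overall strategy --- introduce an infrared cutoff $\sigma$, extract uniform a priori bounds, upgrade a weak limit to a non-zero ground state by compactness, and then deduce non-degeneracy from a $Q$-space Perron--Frobenius--Faris argument --- is the right general shape, and the non-degeneracy part matches exactly what the paper reports in Section~\ref{ssecpos} (the Feynman--Kac semigroup is positivity improving in $Q$-space, hence the ground state is simple). The paper does not prove the proposition itself; it cites it as a special case of Theorem~1.4 of \cite{HiroshimaMatte2022}, where the argument is organized more probabilistically (Gibbs measures on path space), though it likewise relies on the infrared cutoff $\sigma$ and a positivity improving semigroup.

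The concrete gap is in your compactness step. Uniform bounds on $\langle\Phi^\sigma,\hf\Phi^\sigma\rangle$ and $\langle\Phi^\sigma,\rd\Gamma(\one)\Phi^\sigma\rangle$ do \emph{not} give precompactness of $\{\Phi^\sigma\}$ in $\cF$, not even after restricting to a fixed $n$-boson sector. Already on the one-boson sector $L^2(\RR^3)$, uniform bounds on $\|\Phi^{(1)}\|$ and on $\|\omega^{1/2}\Phi^{(1)}\|$ do not prevent the sequence $\Phi^{(1)}_n(k)=g(k)e^{in\,c\cdot k}$, $g$ a fixed bump, from converging weakly to zero; the operators $\hf$ and $\rd\Gamma(\one)$ act as multiplication in the boson momenta and provide no localization in the conjugate variable, so the embedding $\cD(\hf^{1/2})\cap\cD(\rd\Gamma(\one)^{1/2})\hookrightarrow\cF$ is not compact. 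What the spectral-theoretic route (G\'erard, Griesemer--Lieb--Loss, Panati, Spohn) really uses is a pointwise pull-through bound $\|a(k)\Phi^\sigma\|_\cH\le F(k)$ with an $L^2$ majorant $F$ independent of $\sigma$, \emph{together with} a derivative (or commutator) bound in $k$, and these two ingredients supply the missing precompactness via a Rellich-type argument in the momentum variables. You rightly flag that carrying out the pull-through in the renormalized model, where $\varphi(v_{\infty,x})$ is not an operator, is itself delicate and forces one to argue at the level of the Feynman--Kac representation; but the difficulty appears one step earlier in your outline: even granting the two expectation bounds you list, they are insufficient to upgrade weak to strong convergence, so as written the argument would not rule out $\Phi=0$.
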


Any normalized eigenvector of $H$ corresponding to $E_{\rm g}$ will be called a ground state eigenvector and denoted by $\gr$ in what follows.

In fact, the above proposition holds for all Kato-decomposable $V$ satisfying a binding condition,
which trivially follows for $V$ going to infinity as $|x|\to\infty$; see \cite[\textsection3.3]{HiroshimaMatte2022}. 
The infrared condition \eqref{IRcond} actually is optimal, as one can prove absence of ground states for arbitrary Kato-decomposable 
$V$ when \eqref{IRcond} does not hold \cite[Theorem~1.5]{HiroshimaMatte2022}. 


\subsection{Spatial decay of ground state eigenvectors}\label{ssecspatdecay}

Let us now describe our main results. In the whole Section~\ref{ssecspatdecay} we assume that 
\begin{align}\label{hypV}
\text{$V$ is continuous, $V\geq1$ on $\RR^3$ and} \ \
V(x)\xrightarrow{\;\;|x|\to\infty\;\;}\infty.
\end{align} 

First, we introduce the exponential rate functions governing the decay of ground state eigenvectors. 
To this end we introduce spaces of Lipschitz continuous paths
\begin{align}\label{defcCT}
\cC_T(x,y):=\{\gamma\in {\rm Lip}([0,T],\RR^3)\mid \gamma(0)=y, \gamma(T)=x\},\quad x,y\in\RR^3,\,T>0,
\end{align}
as well as the length functional, defined on the union of all these path spaces,
\begin{align}\label{defsL}
&\sL_V(\gamma):=\int_0^T\sqrt{2V(\gamma(s))}|\dot \gamma(s)|\rd s,\quad \gamma\in\cC_T(x,y).
\end{align}
Since every $\gamma\in \cC_T(x,y)$ is a.e. differentiable with an essentially bounded derivative, $\sL_V(\gamma)$ is well-defined,
and the value $\sL_V(\gamma)$ is manifestly invariant under Lipschitz re-parametrizations of $\gamma$. 
It is therefore natural to define the geodesic distances
\begin{align}\label{defvrx}
d(x,y)&:=\inf_{\gamma\in\cC_1(x,y)}\sL_V(\gamma),\quad \varrho(x):=d(x,0),\qquad x,y\in\RR^3.
\end{align}
In studies of eigenfunctions of Schr\"{o}dinger operators, $d$ is commonly called Agmon distance \cite{CS81}. 
We stick to this nomenclature, as we will adapt arguments of \cite{CS81} to Nelson's model.

The Agmon distance to the origin $\varrho$ appears in our decay estimates.
It is locally Lipschitz continuous, in particular differentiable a.e. on $\RR^3\setminus\{0\}$, and
it satisfies the eikonal equation
\begin{align}\label{eikonaleq}
\frac{1}{2}|\nabla\rho(x)|^2=V(x),\quad\text{a.e. $x\in\RR^3\setminus\{0\}$;}
\end{align} 
see Lemma~\ref{lemeikonal} for an explanation as to why these facts follows from (e.g.) \cite[Theorem~5.2]{Lions1982}.

As mentioned in the general introduction, we already derived upper bounds on the $L^2$-localization and the point-wise decay of ground state eigenvectors
and more general elements of spectral subspaces of $H$ in \cite{HiroshimaMatte2022}. Combining these results with \eqref{eikonaleq}, we arrive at
the following proposition. The technical growth restriction \eqref{extraV} is convenient when dealing with the issue that $\varrho$ might not be
globally Lipschitz continuous, which complicates the application of exponentially weighted $L^2$ to $L^\infty$ bounds on the semi-group $e^{-TH}$. 

\begin{prop}\label{propub}
Assume that $V$ satisfies \eqref{hypV}. Let $\eps>0$ and $\lambda\geq E_{\rm g}$. Then the range of the spectral projection 
$\one_{(-\infty,\lambda]}(H)$ is contained in the domain of the maximal multiplication operator $e^{(1-\eps)\varrho}$.
Let $\Psi$ be any normalized element in the range of $\one_{(-\infty,\lambda]}(H)$. Then $\Psi$ has a unique continuous representative $\Psi(\cdot)$.
Assume in addition that, for every $\delta>0$, there exists $D_\delta\in[1,\infty)$ such that
\begin{align}\label{extraV}
V(x)&\leq D_\delta e^{\delta|x| },\quad x\in\RR^3.
\end{align}
Then there exists $C_{\eps,\lambda}\in(0,\infty)$, also depending on $V$, $g\eta$ and $\nu$, such that the
continuous representative of $\Psi$ satisfies
\begin{align}\label{pwub}
\|\Psi(x)\|_\cF&\leq C_{\eps,\lambda}e^{-(1-\eps)\varrho(x)},\quad x\in\RR^3.
\end{align}
\end{prop}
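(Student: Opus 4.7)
The plan is to verify that $F := (1-\varepsilon)\varrho$ is an admissible weight in the sense required by the abstract exponential $L^2$-localization and pointwise decay bounds proved in \cite{HiroshimaMatte2022}, and then to translate those bounds into the stated form involving the Agmon distance. In other words, Proposition~\ref{propub} is essentially a packaging result: the hard probabilistic/semigroup analysis is already done in the previous paper, and the new ingredient is the eikonal identity \eqref{eikonaleq}, which turns the abstract gradient hypothesis on a weight into a statement about $\varrho$.

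The key algebraic input is \eqref{eikonaleq}: since $\varrho$ is locally Lipschitz with $|\nabla \varrho|^2 = 2V$ a.e.~on $\RR^3\setminus\{0\}$, the weight $F$ satisfies
\[
\tfrac{1}{2}|\nabla F|^2 = (1-\varepsilon)^2 V \leq (1-\kappa)V \quad \text{a.e.,}
\]
with $\kappa := 1-(1-\varepsilon)^2>0$. This is precisely the subsolution inequality appearing as the hypothesis for the exponential $L^2$-localization of spectral subspaces of $\hn$ obtained in \cite{HiroshimaMatte2022}, which yields bounds of the form $\|e^F\Psi\|\leq C_{\kappa,\lambda}\|\Psi\|$ for every $\Psi$ in the range of $\one_{(-\infty,\lambda]}(\hn)$. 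Since $\varrho$ need not be globally Lipschitz, I would first apply that estimate to a sequence of globally Lipschitz truncations $F_n$ with $F_n\uparrow F$ pointwise and $|\nabla F_n|\leq|\nabla F|$ a.e., and then pass to the limit via monotone convergence to conclude $e^F\Psi\in L^2$. This establishes the first assertion.

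For the pointwise statement, I would write $\Psi = e^{T\lambda}e^{-T\hn}\widetilde\Psi$ via spectral calculus, where $\widetilde\Psi := e^{T(\hn-\lambda)}\Psi$; since $\Psi$ lies in the spectral subspace below $\lambda$, $\|\widetilde\Psi\|\leq\|\Psi\|$. The existence and uniqueness of a continuous representative of $\Psi$ then follow from the continuity-of-semigroup-images statement announced in Section~\ref{secFK}. To quantify the decay, I would combine the $L^2$-bound on $e^F\widetilde\Psi$ from the previous paragraph with the exponentially weighted $L^2\to L^\infty$ estimate of \cite{HiroshimaMatte2022}, i.e.~analyze the conjugated semigroup $e^F e^{-T\hn} e^{-F}$ applied to $e^F\widetilde\Psi\in L^2$. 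The growth condition \eqref{extraV} enters precisely at this stage: quantitative control of the weighted semigroup requires control of the \emph{local} Lipschitz constants of $F$, which via $|\nabla F|=\sqrt{2}(1-\varepsilon)\sqrt{V}$ are governed by the sub-exponential growth of $V$ assumed in \eqref{extraV}.

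The main technical obstacle is the possibly unbounded gradient of $\varrho$ at infinity. This is what forces both truncation arguments above, and it is the purpose of \eqref{extraV} to keep the Lipschitz constants of the truncated weights sufficiently well-behaved for the quantitative $L^2\to L^\infty$ estimate of \cite{HiroshimaMatte2022} to survive in the limit $n\to\infty$ with a finite constant $C_{\varepsilon,\lambda}$ depending on the stated data.
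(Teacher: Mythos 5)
Your overall architecture matches the paper's: use the eikonal identity \eqref{eikonaleq} to verify the subsolution hypothesis of the $L^2$-localization estimate for spectral subspaces in \cite{HiroshimaMatte2022}, and then upgrade to a pointwise bound by conjugating the semigroup and invoking a weighted $L^2\to L^\infty$ estimate for $e^{-T\hn}$. This is indeed the route the paper takes.

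However, there is a genuine gap in the second step. You describe the pointwise part as a single weighted $L^2\to L^\infty$ estimate that ``survives in the limit $n\to\infty$'' once \eqref{extraV} keeps the Lipschitz constants of the truncations ``sufficiently well-behaved.'' This does not close. The weighted $L^2\to L^\infty$ bound (see \eqref{L2Linfty}) carries a factor $e^{6L^2T}$, where $L$ is the global Lipschitz constant of the conjugating weight. Since $|\nabla\varrho|=\sqrt{2V}$ a.e.\ and $V\to\infty$, any globally Lipschitz truncation that equals $(1-\eps)\varrho$ on a ball of radius $\sim\varrho(y)$ has $L\sim\sqrt{2V}\sim\sqrt{D_\delta}\,e^{\delta\varrho(y)/2}$, so with $T$ fixed the factor $e^{6L^2T}$ grows super-exponentially in $\varrho(y)$ and dominates the target decay $e^{-(1-\eps)\varrho(y)}$. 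Moreover there is no single sequence $F_n\uparrow(1-\eps)\varrho$ to take an $n\to\infty$ limit over here; the truncation must be chosen differently for each base point. The paper's resolution is to fix $y$, build a truncation $\theta=f\circ\varrho$ that equals $\varrho$ on $\{\varrho\le\varrho(y)\}$ (so $\theta(y)=\varrho(y)$) and plateaus just above, yielding a $y$-dependent Lipschitz bound $L=\sqrt{2D_\delta}e^{\delta(\varrho(y)+2)/2}$, and then to choose the semigroup time $T:=1/L^2\le1$ so that $L^2T=1$. After this choice, $e^{6L^2T}$ is bounded, $e^{T\lambda}\le e^{0\vee\lambda}$, and the only remaining $y$-dependence is the prefactor $T^{-3/4}=L^{3/2}\lesssim e^{3\delta\varrho(y)/4}$, which--after fixing $\delta$ by $3\delta/2=\eps$--is absorbed by the slack between the exponent $(1-\eps/2)$ used in the conjugation and the exponent $(1-\eps)$ in the claim. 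Without this $y$-dependent time, \eqref{extraV} alone is not enough to produce a finite $C_{\eps,\lambda}$.

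Two smaller remarks. First, your subsolution inequality $\tfrac12|\nabla F|^2\le(1-\kappa)V$ with $F=(1-\eps)\varrho$ is not quite how the paper phrases the hypothesis for Proposition~3.3 of \cite{HiroshimaMatte2022}; that proposition needs a bound of the form $E^0+V-\tfrac12|\nabla F|^2-O(R^{-2})-\lambda\ge1$ on $\{|x|>R\}$, which uses both the eikonal identity and $V(x)\to\infty$ to choose $R$. Your inequality is the right germ of the idea, but one still has to make the Dirichlet threshold $\Sigma_R$ and the parameter $R$ explicit. Second, the continuity of the representative is obtained here from the Feynman--Kac formula \eqref{FKF} together with Lemma~\ref{lemcont}, after writing $\Psi=e^{-T\hn}\one_{(-\infty,\lambda]}(H)\{e^{T\hn}\one_{(-\infty,\lambda]}(H)\}\Psi$; your factorization $\Psi=e^{T\lambda}e^{-T\hn}\widetilde\Psi$ is essentially equivalent, so this part is fine.
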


\begin{proof}
In Section~\ref{ssecubproof} 
we explain how the claimed localization properties can be extracted from \cite{HiroshimaMatte2022}.
 The existence of continuous representatives is addressed in Section~\ref{sseccont}.
\end{proof}

Our main result is a lower bound for ground state eigenvectors involving a modified Agmon distance
arising when the techniques of \cite{CS81} are applied, namely
\begin{align}\label{defVplus}
\varrho_{\circ}(x)&:=\inf_{\gamma\in\cC_1(x,0)}\sL_{V_{\circ}}(\gamma),\quad x\in\RR^3,\quad\text{with}\quad V_{\circ}(x):=\sup_{|x-y|\leq1}V(y),\quad x\in\RR^3.
\end{align}
Our lower bound matches asymptotically with \eqref{pwub} (see Remark~\ref{remasympmatch}) and $\varrho$ can be put in place of 
$\varrho_\circ$ in the next theorem, whenever
\begin{align}\label{asympmatch}
\frac{\varrho_{\circ}(x)}{\varrho(x)}\xrightarrow{\;\;|x|\to\infty\;\;}1.
\end{align}
We denote by $\Omega$ the vacuum vector in $\cF$ (see \eqref{vacuum}), so that the first bound in \eqref{pwlb} is just
the Cauchy--Schwarz inequality for the scalar product on $\cF$.

\begin{thm}\label{mainthmvr}
Assume that $V$ satisfies \eqref{hypV} and \eqref{IRcond} holds. Let $\eps>0$.
Then there exists $c_{\eps}\in(0,\infty)$, also depending on $V$, $g\eta$ and $\nu$, such that
the unique continuous representative $\gr(\cdot)$ of the (normalized) ground state eigenvector $\gr$ satisfies
\begin{align}\label{pwlb}
\|\gr(x)\|_\cF\geq|(\Omega,\gr(x))_{\cF}|&\geq c_{\eps}e^{-(1+\eps)\varrho_{\circ}(x)},\quad x\in\RR^3.
\end{align}
\end{thm}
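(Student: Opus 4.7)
Since $\|\gr(x)\|_\cF\geq|(\Omega,\gr(x))_\cF|$, it suffices to bound from below the scalar Fock-vacuum component $\psi(x):=(\Omega,\gr(x))_\cF$; by a choice of phase we may assume $\psi\geq 0$, and positivity of the integrand in the Feynman--Kac formula (after taking the vacuum inner product) together with a Perron--Frobenius argument shows that $\psi$ is in fact strictly positive on $\RR^3$ and, being continuous by Section~\ref{sseccont}, bounded below by some $c_0>0$ on the closed unit ball around the origin. The starting point is $\gr=e^{T\eg}e^{-T\hn}\gr$, which via the Feynman--Kac representation of Section~\ref{secFK} gives
\begin{equation*}
\psi(x)=e^{T\eg}\,\EE^x\!\lk e^{-\int_0^TV(X_s)\rd s+W_T}\,(\Omega,\gr(X_T))_\cF\rk,
\end{equation*}
where $X_s=x+B_s$ is the Brownian trajectory and $W_T$ is the (generally complex) stochastic process carrying the renormalized matter--radiation interaction. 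The key input from \cite{MatteMoeller2018} that makes this viable \emph{without} an ultraviolet cutoff is that $W_T$ possesses exponential moments of every order whose logarithms grow at most log-linearly in $T$; this dictates the choice $T=O(\varrho_\circ(x))$ below.

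Following \cite[Section~2]{CS81}, I would choose for each $x$ of large norm a time $T=T(x)$ growing linearly with $\varrho_\circ(x)$ together with a smooth curve $\gamma\in\cC_T(-x,0)$, obtained by mollifying a near-minimizer in the definition \eqref{defVplus} of $\varrho_\circ$ and reparametrizing so that $|\dot\gamma(s)|\simeq\sqrt{2V_\circ(x+\gamma(s))}$. For such a curve the classical action $\int_0^T(\tfrac12|\dot\gamma|^2+V_\circ(x+\gamma))\rd s$ approximates $\varrho_\circ(x)$ up to an error of at most $\tfrac\eps3\varrho_\circ(x)$, while the speed $\|\dot\gamma\|_\infty$ and the total variation $\|\ddot\gamma\|_{L^1}$ remain polynomially bounded in $V_\circ(x)$ and $T$. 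Girsanov's theorem with drift $\dot\gamma$ then recasts the identity above as $\psi(x)=e^{T\eg}\tilde\EE^x[e^{Y_T}\psi(x+\gamma(T)+\tilde B_T)]$, with $X_s=x+\gamma(s)+\tilde B_s$ under the new measure and
\begin{equation*}
Y_T=-\int_0^TV(X_s)\rd s+W_T-\int_0^T\dot\gamma(s)\cdot\rd\tilde B_s-\tfrac12\int_0^T|\dot\gamma(s)|^2\rd s.
\end{equation*}

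The decisive step is to restrict the expectation to the tube event $A:=\{\sup_{0\leq s\leq T}|\tilde B_s|\leq 1\}$. On $A$ the definition of $V_\circ$ gives $V(X_s)\leq V_\circ(x+\gamma(s))$, the endpoint $x+\gamma(T)+\tilde B_T=\tilde B_T$ lies in the closed unit ball of $0$ so $\psi(X_T)\geq c_0$, and integration by parts bounds $|\int_0^T\dot\gamma\cdot\rd\tilde B|\leq 2\|\dot\gamma\|_\infty+\|\ddot\gamma\|_{L^1}$. Combining the AM--GM inequality $\tfrac12|\dot\gamma|^2+V_\circ(x+\gamma)\geq\sqrt{2V_\circ(x+\gamma)}|\dot\gamma|$ with $\sL_{V_\circ}(x+\gamma(\cdot))\geq\varrho_\circ(x)$ and the choice of $\gamma$ then gives
\begin{equation*}
\Re Y_T\geq-(1+\tfrac\eps3)\varrho_\circ(x)+\Re W_T-CT
\end{equation*}
on $A$. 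A standard tube estimate yields $\tilde\PP(A)\geq e^{-CT}$, and Jensen's inequality combined with the log-linear moment bounds for $W_T$ produces $\tilde\EE^x[\one_A e^{\Re W_T}]\geq e^{-C(1+\log T)}\tilde\PP(A)\geq e^{-C'T}$. Since $T=O(\varrho_\circ(x))$ and $\eg$ is a fixed constant, all linear-in-$T$ and log-$T$ error terms, together with $e^{T\eg}$, can be absorbed into $e^{-(\eps/3)\varrho_\circ(x)}$ for $|x|$ large, yielding \eqref{pwlb} after an easy adjustment of $c_\eps$ to cover bounded $x$.

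The main obstacles I anticipate are: (i) carrying the renormalized field process $W_T$ through the Girsanov shift without an ultraviolet cutoff, which is possible only because the log-linear moment bounds of \cite{MatteMoeller2018} permit $T$ to grow as large as $O(\varrho_\circ(x))$; (ii) the necessity of working with $V_\circ$ in place of $V$, forced by the fact that Brownian fluctuations populate a tube of radius $\sim 1$ around the deterministic geodesic so that $V(X_s)\leq V(x+\gamma(s))$ cannot be guaranteed pointwise; and (iii) the simultaneous control of $\|\dot\gamma\|_\infty$, $\|\ddot\gamma\|_{L^1}$ and the classical action of the mollified near-geodesic $\gamma$, which is needed both to bound the Girsanov stochastic integral by integration by parts and to ensure that AM--GM is almost saturated.
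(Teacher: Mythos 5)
Your overall scaffolding (Feynman--Kac for $\gr=e^{T\eg}e^{-T\hn}\gr$, restriction to a unit tube around a near-minimizing path, use of $V_\circ$ to handle the tube fluctuation, $T\sim\varrho_\circ(x)$ so that the log-linear moment bounds absorb the field errors) is the right one and matches the paper in spirit. However, there is a genuine gap in how you handle the quantum field contribution, and it is the central difficulty of the theorem.

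You write the Feynman--Kac integrand as $e^{-\int_0^T V(X_s)\rd s + W_T}\,(\Omega,\gr(X_T))_\cF$, treating the renormalized interaction as a scalar multiplier $e^{W_T}$ in front of the vacuum component of $\gr$ at the endpoint. This is not what the formula gives. In the renormalized Nelson model, the Feynman--Kac kernel $\Gamma(e_x)J_{[0,T]}\Gamma(e_{-x})$ is an \emph{operator} on $\cF$ built from Weyl-type exponentials $I_T(-U_T)I_T(-\tilde U_T)^*$, and after pairing with the vacuum one obtains (using \eqref{ellrel})
\begin{equation*}
(\Omega,\Gamma(e_x)J_{[0,T]}\Gamma(e_{-x})\gr(y))_\cF
= e^{S_T-\tfrac12\|\tilde U_T\|^2}\,\big(e^{-\varphi(e_x\tilde U_T)}\Omega,\gr(y)\big)_\cF,
\end{equation*}
where $e^{-\varphi(e_x\tilde U_T)}\Omega$ is a (random) coherent-type state, not $\Omega$ itself. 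The scalar $e^{S_T}$ part is what you can control by the exponential moment bounds, but the remaining factor $\big(e^{-\varphi(e_x\tilde U_T)}\Omega,\gr(y)\big)_\cF$ is not of the form $(\text{scalar})\cdot(\Omega,\gr(y))_\cF$, and it could a priori be very small or even change sign. Bounding it from below requires exploiting the strict positivity of $\sI\gr(y)$ in a $Q$-space representation and applying Jensen's inequality there, giving $\big(e^{-\varphi(f)}\Omega,\gr(y)\big)_\cF\geq \ell(y)\exp(-\|f\|\|\gr\|_\infty/\ell(y))$ (the paper's Lemma~\ref{ag00-r}). This positivity-plus-Jensen step is the new ingredient needed beyond Carmona--Simon; your proposal never identifies it.

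Two further, lesser, points. First, you perform the Girsanov change of measure on the whole expectation and then invoke moment bounds for $W_T$ that hold under the \emph{original} measure $\PP$; the paper avoids this by first applying H\"older's inequality under $\PP$ to factor out the field process, and only then applying the Girsanov/Jensen argument to the purely Brownian probability $\PP[M_T]$. Second, your pointwise bound of $|\int_0^T\dot\gamma\cdot\rd\tilde B|$ on the tube by integration by parts forces mollification of the path and control of $\ddot\gamma$; the paper instead bounds $\PP[M_T]\geq e^{-\frac12\int_0^T|\dot q|^2\rd s}\PP[Q_T]$ via Jensen and the symmetry of Brownian motion (so $\EE[m_T\one_{Q_T}]=0$), needing only Lipschitz $q$. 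These are stylistic rather than fatal, but the treatment of $\big(e^{-\varphi(e_x\tilde U_T)}\Omega,\gr(y)\big)_\cF$ is a genuine missing step.
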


\begin{proof}
The proof of this theorem can be found at the end of Section~\ref{s4}.
\end{proof}

\begin{rem}\label{remasympmatch}
Analogously to \cite{CS81}, if \eqref{asympmatch} is satisfied and \eqref{pwub} and \eqref{pwlb} hold for all $\eps>0$, then we obtain  
\begin{align*}
-\frac{\ln(\|\gr(x)\|_{\cF})}{\varrho(x)}\xrightarrow{\;\;|x|\to\infty\;\;}1.
\end{align*} 
This showcases the same leading asymptotic behavior as for ground state eigenfunctions of Schr\"{o}dinger operators
with potentials satisfying \eqref{hypV} and \eqref{asympmatch}, \cite{CS81}. 
\end{rem}

 \begin{exam}
 Besides \eqref{hypV}, assume that $V$ is differentiable on the complement of some ball in $\RR^3$ and
 $|\nabla V(x)|/V(x)\to0$ as $|x|\to\infty$. As noted in \cite{CS81}, \eqref{asympmatch} holds in this situation. 
 \end{exam}

\begin{exam} 
Assume that $V\geq1$ is continuous and obeys the bounds
\begin{align}\label{hypVexam}
\frac{a^2}{2}|x|^{2n}-b\leq V(x)\leq \frac{A^2}{2}|x|^{2m}+B,\quad x\in\RR^3,
\end{align}
for some $m\geq n>0$, $a,A>0$ and $b,B\geq 0$. Let $\eps>0$. Then we find $C_\eps\geq c_\eps>0$, also depending
on all parameters in \eqref{hypVexam} as well as on $g\eta$ and $\nu$, such that
\begin{align}
\label{oliver}
c_\eps e^{-(1+\eps)\frac{A}{m+1}|x|^{m+1}}\leq
\|\gr(x)\|_\cF\leq C_\eps e^{-(1-\eps)\frac{a}{n+1}|x|^{n+1}},\quad x\in\RR^3.
\end{align}
In fact, the upper bound has already been shown in \cite[Theorem 1.2]{HiroshimaMatte2022}. Furthermore,
for every $\delta>0$, we find some $C_{\delta,A,B,m}\in(0,\infty)$ such that, with $\gamma(s)=sx$ as trial path,
\begin{align*}
\varrho_\circ(x)&\le \int_0^1\big(A^2(|sx|+1)^{2m}+2B\big)^{1/2}|x|\rd s
\leq (1+\delta)\frac{A}{m+1}|x|^{m+1}+C_{\delta,A,B,m},
\end{align*}
for all $x\in\RR^3$. In combination with Theorem~\ref{mainthmvr} this proves the lower bound in \eqref{oliver}.
\end{exam}

Spatial decay estimates can be extended by combining them with the types of localization estimates explained in the next two remarks:

\begin{rem}\label{remotherloc1}
Let $\lambda\geq E_{\rm g}$ and $\Psi(\cdot)$ be the continuous representative of some
normalized $\Psi\in\mathrm{Ran}(\one_{(-\infty,\lambda]}(H))$. In \cite[page 33]{HiroshimaMatte2022} we showed
that $\|e^{r\rd\Gamma(\om\wedge1)}\Psi(x)\|_{\cF}\leq c e^{6r\lambda}\|\Psi(x)\|_{\cF}^{1-\delta}$ for all $x\in\RR^3$,
$r>0$ and $\delta\in(0,1)$, where $c\in(0,\infty)$ solely depends on $r$, $\delta$, $g$ and $V$.
(Here we again use the notation \eqref{defhf} for differential second quantizations.)
Combining this bound with the inequality on \cite[page~62]{HiroshimaMatte2022},
we find an exponential boson number localization bound on the continuous ground state eigenvector,
\begin{align}\label{Nlocgr}
\|e^{r\rd\Gamma(1)}\gr(x)\|_{\cF}&\leq C\|\gr(x)\|_{\cF}^{1-\delta},\quad x\in\RR^3,
\end{align}
for all $r>0$ and $\delta\in(0,1)$, where $C\in(0,\infty)$ depends, besides $r$, $\delta$, $g$ and $V$, also explicitly on the
integral value in \eqref{IRcond}. These bounds can obviously be combined with \eqref{pwub}. 
\end{rem}

\begin{rem}\label{remotherloc2}
The field operator $\varphi(f)$ is interpreted as a field position observable in case
\begin{align}\label{defhreal}
f\in \mathfrak{h}_{\RR}&:=\{f\in L^2(\RR^3)|\,\bar{f}(k)=f(-k)\;\text{for a.e. $k\in\RR^3$}\}.
\end{align} 
If $f\in \mathfrak{h}_{\RR}$ satisfies $\|f\|<1/2$, then the continuous ground state eigenvector obeys the bounds
\begin{align*}
\frac{\|\gr(x)\|_{\cF}}{(1-4\|f\|^2)^{1/4}}&\leq\|e^{\varphi(f)^2}\gr(x)\|_{\cF}
\leq\frac{\|e^{\rd\Gamma(1)/(\alpha-1)}\gr(x)\|_{\cF}}{(1-4\alpha\|f\|^2)^{1/2}},\quad x\in\RR^3,
\end{align*}
for all $\alpha>1$ with $4\alpha\|f\|^2<1$, \cite[Remark~1.13 and Theorem~1.14]{HiroshimaMatte2022},
which can be combined with  \eqref{pwub}, \eqref{pwlb} and \eqref{Nlocgr}. 
Since the left hand side diverges, as $\|f\|\uparrow1/2$, the left inequality is a lower bound on the
localization of the radiation field in the ground state in an average sense. 
In fact, $\gr(x)\notin\cD(e^{\varphi(f)^2})$ for all $x\in\RR^3$, if $f\in \mathfrak{h}_{\RR}$ satisfies $\|f\|\geq1/2$,~\cite{HiroshimaMatte2022}.
\end{rem}


\section{Feynman--Kac formula and some of its implications}\label{secFK}

\noindent
In this section we mainly collect prerequisites for proving the lower bound stated in Theorem~\ref{mainthmvr}.
We start by introducing the necessary Fock space calculus in Section~\ref{ssecFock}, and present the Feynman--Kac formula
for the renormalized Nelson model together with some crucial exponential moment bounds in Section~\ref{ssecFK}.
In Section~\ref{sseccont} we give a short proof of the continuity of ground state eigenvectors and more general functions.
As a byproduct, an exponentially weighted $L^2$ to $L^\infty$ bound on the semigroup \cite{HiroshimaMatte2022,MatteMoeller2018} 
is re-derived in that subsection as well. In Section~\ref{ssecubproof}, we infer Proposition~\ref{propub} from results of \cite{HiroshimaMatte2022}.
In a crucial step in our proof of Theorem~\ref{mainthmvr} (Lemma~\ref{ag00-r}) we shall exploit the strict positivity of suitably chosen 
ground state eigenvectors in a $Q$-space representation of the Fock space. Related notation and results from 
\cite{HiroshimaMatte2022,MatteMoeller2018} are compiled in the final Section~\ref{ssecpos}.

\subsection{Some bosonic Fock space calculus}\label{ssecFock}

The bosonic Fock space over $L^2(\RR^3)$ is 
\begin{align}\label{defFock}
\cF=\CC\oplus\bigoplus_{n=1}^\infty L^2_{\rm sym}(\RR^{3n}),
\end{align}
where $L^2_{\rm sym}(\RR^{3n})\subset L^2(\RR^{3n})$ is the closed subspace comprised of all $\Phi^{(n)}\in L^2(\RR^{3n})$ such that
$\Phi^{(n)}(k_{\pi(1)},\ldots,k_{\pi(n)})=\Phi^{(n)}(k_1,\ldots,k_n)$ a.e. for all permutations $\pi$ of $\{1,\ldots,n\}$; here and below $k_j\in\RR^3$.
The vacuum vector in $\cF$ is denoted by
\begin{align}\label{vacuum}
\Omega=1\oplus0\oplus0\oplus\cdots\in \cF.
\end{align}

The selfadjoint differential second quantization of a non-negative multiplication operator $\varkappa$ in $L^2(\RR^3)$ is given by
\begin{align}\label{defhf}
(\rd\Gamma(\varkappa)\Phi)^{(n)}(k_1,\ldots,k_n)&:=\sum_{j=1}^n\varkappa(k_j)\Phi^{(n)}(k_1,\ldots,k_n),
\end{align}
and $(\rd\Gamma(\varkappa)\Phi)^{(0)}:=0$, where the domain of $\rd\Gamma(\varkappa)$ is the set of all $\Phi=\bigoplus_{n=0}^\infty\Phi^{(n)}$
for which these formulas define a new element $\rd\Gamma(\varkappa)\Phi$ of $\cF$. 
We already defined the field energy operator by $\hf=\rd\Gamma(\omega)$.
For every $x\in\RR^3$,  we further define the unitary second quantization $\Gamma(e_x)$
of the multiplication operator corresponding to the complex exponential $e_x$ introduced in \eqref{defxexk} by
\begin{align*}
(\Gamma(e_x)\Phi)^{(n)}(k_1,\ldots,k_n)&:=e_x(k_1)\dots e_{x}(k_n)\Phi^{(n)}(k_1,\ldots,k_n),
\end{align*}
and $(\Gamma(e_x)\Phi)^{(0)}:=\Phi^{(0)}$.

As usual, $\add(f)$ denotes the creation operator in $\cF$ smeared by $f\in L^2(\RR^3)$. It is given by
\begin{align}\label{defadagger}
(\add(f)\Phi)^{(n)}(k_1,\ldots,k_n)&:=\frac{1}{\sqrt{n}}\sum_{j=1}^nf(k_j)\Phi^{(n-1)}(k_1,\ldots,k_{j-1},k_{j+1},\ldots,k_n),
\end{align}
and $(\add(f)\Phi)^{(0)}:=0$.
Again, the domain of $\add(f)$ is the set of all $\Phi\in\cF$ for which the above formulas define a new element $\add(f)\Phi$ of $\cF$.
The annihilation operator and the selfadjoint field operator corresponding to $f\in L^2(\RR^3)$ are defined by
\begin{align}\label{defavp}
a(f):=\add(f)^*,\qquad \varphi(f):=(\add(f)+a(f))^{**},
\end{align}
respectively. 
If $f,g\in\cD(1/\sqrt{\om})$, then every polynomial in $\add(f)$ and $a(g)$ of order $m$ is relatively $\hf^{m/2}$-bounded, for instance
\begin{align}\label{rbvp}
\|\varphi(f)\Phi\|&\le 2\|f/\sqrt{\om\wedge 1}\|\|(1+\hf)^{1/2}\Phi\|,\quad\Phi\in\cD(\hf^{1/2}),
\end{align}
and the canonical commutation relations
\begin{align}\label{CCR}
[a(g),\add(f)]=(g,f)\one_{\cF},\quad[a(g),a(f)]=0=[\add(g),\add(f)]\quad\text{hold on $\cD(\hf)$.}
\end{align}

As shown in \cite[Lemma~17.4]{GMM2017} and \cite[\textsection1.2.7]{HL20}, the series
\begin{align*}
I_T(f)&:=\sum_{n=0}^\infty\frac{1}{n!}\add(f)^ne^{-(T/2)\hf},\quad f\in\cD(1/\sqrt{\om}),\,T>0,
\end{align*}
converge in $\cB(\cF)$, the space of bounded operators on $\cF$, and
\begin{align}\label{ITnorm}
\|I_T(f)\|&\leq \sqrt{2}e^{4\|f\|^2+8\|f/\sqrt{\om}\|^2/T}.
\end{align}
The so-obtained map $I_T$ is analytic from $\cD(1/\sqrt{\om})$, equipped with the graph norm of $1/\sqrt{\om}$, to $\cB(\cF)$.
Later on we shall use the relations
\begin{align}\nonumber
(\Omega,\Gamma(e_x)I_T(g)I_T(f)^*\Gamma(e_{-x})\Phi)_{\cF}
&=(\Gamma(e_x)I_T(f)\Omega,\Phi)_{\cF}
\\\nonumber
&=\sum_{n=0}^\infty\frac{1}{n!}(\add(e_xf)^n\Omega,\Phi)_{\cF}
\\\label{ellrel}
&=e^{-\|f\|^2/2}\sum_{n=0}^\infty\frac{1}{n!}(\varphi(e_xf)^n\Omega,\Phi)_{\cF},
\end{align}
valid for all $f,g\in\cD(1/\sqrt{\om})$, $\Phi\in\cF$ and $T>0$,
where the first two follow directly from the definitions of the involved operators and the last one from \eqref{defavp} and \eqref{CCR}.


\subsection{Feynman--Kac representation}\label{ssecFK}

The next proposition summarizes all previous results we need to know about the probabilistic analysis of the renormalized Nelson model.

In what follows we fix a filtered probability space $(\cX,\sF,(\sF_t)_{t\geq0},\PP)$ fulfilling the usual assumptions
of completeness and right-continuity of the filtration; $\EE$ denotes expectation with respect to $\PP$.
Furthermore, $B=(B_t)_{t\geq0}$ denotes a three-dimensional $(\sF_t)_{t\geq0}$-Brownian motion.
Also recall the definition of  the real Hilbert space $\mathfrak{h}_{\RR}$ in \eqref{defhreal}.
Constants called ``universal" neither depend on $g$, $\nu$, $\eta$, $V$ nor any other involved parameter such as $p$ or $T$.

\begin{prop}\label{fkf}
There exist two $\mathfrak{h}_{\RR}$-valued adapted continuous stochastic processes starting at zero, denoted $(U_{T})_{T\geq0}$ and $(\tilde{U}_{T})_{T\geq0}$,
as well as a real-valued adapted continuous stochastic process starting at zero, the complex action denoted $(S_T)_{T\geq0}$, such that the following statements hold true:
\begin{enumerate}
\item[{\rm(i)}] 
Let $U^\sharp\in\{U,\tilde{U}\}$. Then, for every $T\in(0,\infty)$, the values of the random variable 
$U_T^\sharp:\cX\to\mathfrak{h}_{\RR}$ belong $\PP$-a.s. to $\cD(1/\sqrt{\om})$.
Furthermore, there exist universal constants $c,c',C,C'\in(0,\infty)$ such that, for all $p,T\in(0,\infty)$,
\begin{align*}
\EE\big[\exp(\pm pS_T)\big]&\le Ce^{c(pg^2+p^2g^4)(1\vee T)},
\\
\EE\big[\exp\big(p\|U^\sharp_T\|^2+p\|U^{\sharp}_T/\sqrt{\om}\|^2/T\big)\big]&\le C'(1+pg^2(1\vee T))e^{c'pg^2(1+\ln(1\vee T))+cp^2g^4}.
\end{align*}
\item[{\rm(ii)}] Let $T>0$. Then 
\begin{align}\label{defJT}
J_{[0,T]}&:=e^{S_{T}}I_{T}(-U_T) I_{T}(-\tilde{U}_T)^*,
\end{align}
is a well-defined $\cB(\cF)$-valued random variable with a separable image and
\begin{align}\label{momentbdJ}
\EE\big[\|J_{[0,T]}\|_{\cB(\cF)}^p\big]&\le Ce^{c(pg^2+p^2g^4)(1\vee T)},\quad p\in(0,\infty),
\end{align}
for universal constants $c,C\in(0,\infty)$.
Finally, for every $\Psi\in \cH$, following Feynman--Kac formula holds:
\begin{align}\label{FKF}
(e^{-T\hn}\Psi)(x)&=
\EE\Big[e^{-\int_0^TV(B_s+x) \rd s} \Gamma(e_x) J_{[0,T]} \Gamma(e_{-x})\Psi(B_T+x)\Big],\quad\text{a.e. $x\in\RR^3$.}
\end{align}
\end{enumerate}
\end{prop}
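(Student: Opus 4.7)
The plan is to assemble this statement from the stochastic analysis carried out in \cite{MatteMoeller2018} (with refinements in \cite{HiroshimaMatte2022}), the Feynman--Kac identity itself being obtained by an ultraviolet approximation.

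At fixed cutoff $\Lambda\in(0,\infty)$, since $v_{\Lambda,\cdot}$ lies in $L^2(\RR^3)\cap\cD(1/\sqrt{\omega})$, standard Feynman--Kac theory for quantum field models with regularized interaction furnishes continuous adapted processes $U_T^\Lambda,\tilde U_T^\Lambda$ (defined as suitable stochastic integrals of $s\mapsto e^{-(T-s)\omega}e_{-B_s}v_{\Lambda,0}$ and its time-reversal against $B$) and a real-valued complex action $S_T^\Lambda$ of the form ``double stochastic integral of the free two-point kernel $\langle e^{-|s-t|\omega}v_{\Lambda,0},v_{\Lambda,0}\rangle$ minus $TE_\Lambda$''. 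A Trotter product argument, making crucial use of the identity linking $e^{-\varphi(v_{\Lambda,x})}\Omega$ to exponentials of creation operators together with the relative bound \eqref{rbvp}, then yields the Feynman--Kac representation \eqref{FKF} with $H$ and $J_{[0,T]}$ replaced by their $\Lambda$-regularized analogues.

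The technical heart is to obtain $\Lambda$-uniform versions of the exponential moment bounds in (i), cf.\ \cite{MatteMoeller2018}. For $S_T^\Lambda$ one writes $\EE[e^{\pm pS_T^\Lambda}]$ as a Gaussian partition function with covariance kernel built from $e^{-|s-t|\omega}$ and tracks the logarithmic singularity: the ultraviolet divergence is cancelled exactly by $\mp pTE_\Lambda$, and what remains grows only log-linearly in $T$ because the residual singularity is locally integrable against itself. For $U_T^\sharp$ one first conditions on the Brownian path to reduce to an infinite-dimensional Gaussian in Fock variables and then integrates out $B$; the factor $1/T$ accompanying $\|U_T^\sharp/\sqrt{\omega}\|^2$ is precisely what is needed to couple with the structure of \eqref{ITnorm}. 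Granting these bounds uniformly in $\Lambda$, the processes $U_T,\tilde U_T,S_T$ are obtained as $L^p(\PP)$-limits, and their continuity, adaptedness, and $\cD(1/\sqrt{\omega})$-valuedness $\PP$-a.s.\ are inherited.

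For (ii), inserting the uniform moment bound on $\|U_T^\sharp\|^2+\|U_T^\sharp/\sqrt{\omega}\|^2/T$ into \eqref{ITnorm} and combining with the $S_T$-bound via H\"older yields \eqref{momentbdJ}; separability of the image of $J_{[0,T]}$ follows from the analyticity of $I_T$ on the separable space $\cD(1/\sqrt{\omega})$. Finally, \eqref{FKF} itself is obtained by passing to the limit $\Lambda\to\infty$ in its regularized analogue: the left-hand side converges in operator norm by \eqref{N3}, while the right-hand side converges by dominated convergence using the $\Lambda$-uniform moment bounds together with the $L^p(\PP)$-convergence of $U_T^\Lambda,\tilde U_T^\Lambda,S_T^\Lambda$ and the continuity of $I_T$. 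The main obstacle I expect is the log-linear (rather than linear) $T$-dependence in the moment bounds, which is crucial for the long-time semigroup applications in Section~\ref{sseccont} and requires that the renormalization be performed simultaneously on the operator and stochastic sides, as in \cite{MatteMoeller2018}.
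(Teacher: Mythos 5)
Your proposal reconstructs the $\Lambda$-regularized Feynman--Kac formula and the passage to the limit, which is essentially the content of \cite{MatteMoeller2018}. The paper, by contrast, simply cites \cite[Theorem~4.9, Lemmas~3.12 and 3.17, \textsection5.2, Theorem~5.13]{MatteMoeller2018} for everything except one item, so the bulk of your sketch is redundant with the literature rather than with the paper's proof.

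The one genuinely new ingredient in the paper's proof is the bound $\EE[\exp(-pS_T)]\le Ce^{c(pg^2+p^2g^4)(1\vee T)}$, and this is precisely where your argument breaks. You assert that in $\EE[e^{\pm pS_T^\Lambda}]$ ``the ultraviolet divergence is cancelled exactly by $\mp pTE_\Lambda$''. That cancellation works only for the plus sign: with the complex action normalized so that $S_T^\Lambda$ already contains the subtraction $-E_\Lambda T$, the quantity $\EE[e^{+pS_T^\Lambda}]$ has the compensating factor $e^{-pE_\Lambda T}$ in front of the divergent Gaussian integral, whereas $\EE[e^{-pS_T^\Lambda}]$ acquires the divergent factor $e^{+pE_\Lambda T}$, which must now be beaten by a decaying Gaussian integral. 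There is no ``exact cancellation'' and the naive $\Lambda\to\infty$ limit does not give the claimed bound. The paper handles this by a different device: it decomposes the \emph{limiting} complex action as $S_T=u^{1,<}_{\Lambda,\infty,T}+u^{1,>}_{\Lambda,\infty,T}$ for an arbitrary finite $\Lambda$ (using \cite[Remark~4.13]{MatteMoeller2018}), bounds the low-frequency piece pointwise by $4\pi g^2\Lambda T$, applies H\"{o}lder with a parameter $\theta$ to the high-frequency piece, and then chooses $\theta=1\vee(1/pg^2)$ and $\Lambda=256\pi[1\vee(pg^2)]$ to make both contributions compatible with the universal bound stated. Your proposal contains no analogue of this splitting-and-optimization step, so the minus-sign bound is simply not proved. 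Relatedly, you attribute the log-linear $T$-growth to the $S_T$ bounds, but that appears only in the exponential-moment estimate for $U^\sharp_T$; the $S_T$ bound is linear in $1\vee T$.

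The remainder of your sketch (part (ii): combining \eqref{ITnorm} with the moment bounds via H\"{o}lder to get \eqref{momentbdJ}, and deducing \eqref{FKF}) is on the right track and matches the paper's citation-based argument, but it needs to be anchored to the correct minus-sign estimate for $S_T$ described above before H\"{o}lder can be applied in \eqref{defJT}.
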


\begin{proof}
The first bound in (i) with plus sign in the exponential and second bound in (i) follow directly from 
\cite[Theorem~4.9]{MatteMoeller2018} and \cite[Lemmas~3.12 and~3.17]{MatteMoeller2018}, respectively.
The first bound in (i) with minus sign is a bit unusual. However, by \cite[Remark~4.13]{MatteMoeller2018} the complex action can be split as
$S_T=u_{\Lambda,\infty,T}^{1,<}+u_{\Lambda,\infty,T}^{1,>}$ with $\Lambda\in(0,\infty)$ (here we use the notation of \cite{MatteMoeller2018} taking 
into account that the complex action is $x$-independent when only one matter particle is considered), where
$|u_{\Lambda,\infty,T}^{1,<}|\le 4\pi g^2\Lambda T$ on $\cX$ and for all $\Lambda\in(0,\infty)$ (see \cite[p.~48]{MatteMoeller2018}) and
\begin{align*}
\EE[\exp(-pu_{\Lambda,\infty,T}^{1,>})]&\leq\EE[\exp(-p\theta u_{\Lambda,\infty,T}^{1,>})]^{1/\theta}
\leq D^{1/\theta}e^{pE_{\Lambda}T+bpg^2T+b(1\vee T)/\theta},
\end{align*}
with universal constants $b,D\in[1,\infty)$ provided that we choose $\theta=1\vee(1/pg^2)$ and $\Lambda=256\pi p\theta g^2=256\pi[1\vee(pg^2)]$.
Then $D^{1/\theta}\leq D$, $b/\theta=b(1\wedge(pg^2))$ and the renormalization energy given by \eqref{defEren} satisfies
$E_\Lambda\leq b'g^2(1+0\vee\ln(pg^2))$ with another universal constant $b'\in(0,\infty)$.
When put together, these remarks complete the proof of (i). The assertions on $J_{[0,T]}$ in (ii) stem from \cite[\textsection5.2]{MatteMoeller2018}. They follow from (i),
the remarks on the maps $I_T$ in the previous subsection, in particular \eqref{ITnorm}, and from H\"{o}lder's inequality.
The Feynman--Kac formula \eqref{FKF} can be found in \cite[Theorem~5.13]{MatteMoeller2018}.
\end{proof}


 \subsection{Boundedness and continuity in the range of the semigroup}\label{sseccont}

We fix $T>0$ in the whole Section~\ref{sseccont}.
The next lemma reveals that the right hand side of \eqref{FKF} defines the unique continuous representative of $e^{-T\hn}\Psi$.
Its continuity statement is a special case of \cite[Theorem~8.8]{MatteMoeller2018} according to which $e^{-T\hn}$ maps bounded sets
in $L^p(\RR^3,\cF)$ onto equicontinuous sets of functions from $\RR^3$ to $\cF$.
We re-derive the next lemma nevertheless, because its simpler statement admits a much shorter proof. Moreover, continuity is crucial 
for obtaining lower bounds on the decay of ground state eigenvectors, 
as can be seen directly by having a glance at \eqref{defchiR}, \eqref{defalpha} and \eqref{mick-r}. 
The weighted $L^2$ to $L^\infty$ bound \eqref{L2Linfty} becomes important in Section~\ref{ssecubproof}; 
see \cite[Proposition~3.1]{HiroshimaMatte2022} for more general bounds.

\begin{lem}\label{lemcont}
Let $J:\cX\to\cB(\cF)$ be strongly measurable and such that $\|J\|^p$ is $\mathbb{P}$-integrable for some $p>3$.
Let $F:\RR^3\to\RR$ satisfy $|F(x)-F(y)|\leq L|x-y|$ for all $x,y\in\RR^3$ and some $L\in[0,\infty)$ and
let $\Psi\in \cH$ be such that also $e^F\Psi\in\cH$. Then $A\Psi:\RR^3\to\cF$ given by
\begin{align}\label{defAPsi}
A\Psi(x):=\EE\Big[e^{-\int_0^TV(B_s+x)\rd s}\Gamma(e_x)J\Gamma(e_{-x})\Psi(B_T+x)\Big],\quad x\in\RR^3,
\end{align}
is continuous and satisfies 
\begin{align}\label{L2Linfty}
e^{F(x)}\|A\Psi(x)\|_{\cF}\leq T^{-3/4}e^{6L^2T+c_{p,V_-}(1+T)}\EE[\|J\|^{p}]^{1/p}\|e^F\Psi\|_{\cH},\quad x\in\RR^3,
\end{align} 
with a constant $c_{p,V_-}\in(0,\infty)$ solely depending on $p$ and $V_-$.
\end{lem}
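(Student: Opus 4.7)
The plan is to first establish the weighted point-wise bound \eqref{L2Linfty} and then deduce continuity from it by a density and dominated-convergence argument. Throughout, I would use the unitarity of $\Gamma(e_{\pm x})$ on $\cF$ and the elementary inequality $e^{-\int_0^TV(B_s+x)\rd s}\leq e^{\int_0^TV_-(B_s+x)\rd s}$ to reduce the integrand in \eqref{defAPsi} to a non-negative scalar estimate.

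For the bound, the key observation is that the Lipschitz hypothesis on $F$ yields $e^{F(x)}\leq e^{L|B_T|}e^{F(B_T+x)}$ $\PP$-a.s., which shifts the weight from the base point $x$ to the endpoint $B_T+x$ at the cost of the factor $e^{L|B_T|}$. Multiplying \eqref{defAPsi} by $e^{F(x)}$ and applying H\"older's inequality with three conjugate exponents---namely $2$ for the bundled factor $e^{L|B_T|}e^{F(B_T+x)}\|\Psi(B_T+x)\|_{\cF}$, $p$ for $\|J\|$, and $2p/(p-2)$ (well-defined since $p>2$) for $e^{\int_0^TV_-(B_s+x)\rd s}$---one arrives at three separately controllable factors. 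For the first I would write
\begin{align*}
\EE\big[e^{2L|B_T|}h(B_T)\big]&=(2\pi T)^{-3/2}\int_{\RR^3}e^{2L|y|-|y|^2/(2T)}h(y)\rd y
\end{align*}
with $h(y):=e^{2F(y+x)}\|\Psi(y+x)\|_{\cF}^2$, complete the square to bound $e^{2L|y|-|y|^2/(2T)}$ by $e^{cL^2T}$ for an appropriate constant $c$, and use $\int_{\RR^3}h(y)\rd y=\|e^F\Psi\|_{\cH}^2$; taking a square root produces the characteristic factor $T^{-3/4}e^{cL^2T}\|e^F\Psi\|_{\cH}$. The second H\"older factor is $\EE[\|J\|^p]^{1/p}$ by construction, and the third is controlled uniformly in $x$ by Khasminskii's lemma for the Kato-class function $V_-$, yielding an exponential bound of the form $e^{c_{p,V_-}(1+T)}$.

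For the continuity statement, I would choose a sequence $\Psi_n\in C_c(\RR^3,\cF)$ with $e^F(\Psi_n-\Psi)\to 0$ in $\cH$, which is possible by density since $e^F\Psi\in\cH$. Applying \eqref{L2Linfty} to $\Psi_n-\Psi$ shows that $A\Psi_n\to A\Psi$ uniformly on $\RR^3$, so it suffices to prove continuity of each $A\Psi_n$. For fixed $x_0\in\RR^3$ and $x\to x_0$, the continuity of $\Psi_n$, the strong continuity of $x\mapsto\Gamma(e_x)$ in $\cB(\cF)$, and the a.s.\ continuity in $x$ of $\int_0^TV(B_s+x)\rd s$ for Kato-decomposable $V$ together give point-wise convergence on $\cX$ of the integrand in \eqref{defAPsi}. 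A dominant valid for $|x-x_0|\leq 1$ is furnished by $\|\Psi_n\|_\infty\cdot\|J\|\cdot e^{\sup_{|x-x_0|\leq 1}\int_0^TV_-(B_s+x)\rd s}$, whose expectation is finite by H\"older's inequality combined with Khasminskii-type estimates, so dominated convergence applies.

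The main obstacle is arranging the exponents in the first step so that the Gaussian completion-of-squares contribution combines cleanly with the Khasminskii bound and the $\|J\|^p$-moment into the precise structure stated in \eqref{L2Linfty}, namely with a separately visible factor $e^{6L^2T}$ coming from the Lipschitz shift and a $p$- and $V_-$-dependent exponential growing at most linearly in $T$. A secondary technicality is the a.s.\ continuity in $x$ of $\int_0^TV(B_s+x)\rd s$ for Kato-decomposable $V$ entering the dominated convergence step, which is classical but relies on the $L^1$-mollification property characterizing the Kato class.
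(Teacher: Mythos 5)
The bound \eqref{L2Linfty} is handled correctly: your three-factor H\"older split (exponents $2$, $p$, $2p/(p-2)$, bundling the Lipschitz factor $e^{L|B_T|}$ with the $\Psi$-term and completing the square in the Gaussian integral) works just as well as the paper's four-factor split ($6$, $q$, $p$, $2$), and in fact only requires $p>2$; both yield the stated form. This part is a genuinely different decomposition but equivalent in substance.

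The continuity argument, however, has a real gap. After reducing to $\Psi_n\in C_c(\RR^3,\cF)$ you argue the continuity of $A\Psi_n$ directly for general Kato-decomposable $V$, and you need two claims: (a) that $x\mapsto\int_0^TV(B_s+x)\,\rd s$ is $\PP$-a.s.\ continuous, and (b) that $\EE\bigl[\exp\bigl(\sup_{|x-x_0|\leq1}\int_0^TV_-(B_s+x)\,\rd s\bigr)\bigr]<\infty$. Claim (a) is asserted as ``classical'' but it is not a routine consequence of the Kato condition, which controls only \emph{expected} occupation along the path, not pathwise continuity; for a typical Kato singularity such as $V_-(y)=|y|^{-\alpha}$, $\alpha<2$, one would need a separate argument at those $x$ whose translate lies on the Brownian trace, which is an uncountable set. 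Claim (b) is more serious: Khasminskii's lemma gives $\sup_x\EE[e^{\int_0^TV_-(B_s+x)\,\rd s}]<\infty$, with the supremum \emph{outside} the expectation; bringing the supremum over $x$ inside the exponential is not justified by that lemma, and the naive domination $\sup_{|x-x_0|\leq1}\int_0^TV_-(B_s+x)\,\rd s\leq\int_0^T\sup_{|z|\leq1}V_-(B_s+x_0+z)\,\rd s$ fails because the ``enlarged'' potential $\sup_{|z|\leq1}V_-(\cdot+z)$ is generally not Kato class (it can be identically $+\infty$ on a ball). The paper avoids both issues by first establishing continuity only for $V\in C_0(\RR^3,\RR)$, where a constant majorant $e^{\|V_-\|_\infty T}\|J\|\,\|\Psi_n\|_\infty$ makes dominated convergence trivial, and then approximating a general Kato-decomposable $V$ by $V_n\in C_0(\RR^3,\RR)$ using \cite[Proposition~2.3 and Lemma~C.6]{BHL2000}, which gives convergence of the Feynman--Kac functionals in $L^q(\PP)$ uniformly on compacta and hence uniform convergence of $A_n\Psi\to A\Psi$ on compacta. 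You should add this second layer of approximation in $V$, or supply independent proofs of (a) and (b).
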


\begin{proof}
Let $q>3$ be such that $q^{-1}+p^{-1}+2^{-1}+6^{-1}=1$. Then H\"{o}lder's inequality and the Lipschitz estimate for $F$ imply
\begin{align}\label{HoelderL2Linfty}
e^{F(x)}\|A\Psi(x)\|_{\cF}\leq\EE[e^{6L|B_T|}]^{\frac{1}{6}}
\EE\Big[e^{q\int_0^TV_-(B_s+x)\rd s}\Big]^{\frac{1}{q}}\EE[\|J\|^p]^{\frac{1}{p}}
\EE[\|(e^F\Psi)(B_T+x)\|_\cF^2]^{\half}.
\end{align}
Here $\EE[e^{6L|B_T|}]^{1/6}\leq 2^{1/4}e^{6L^2T}$,
$\EE[\|(e^F\Psi) (B_T+x)\|_\cF^2]\leq (2\pi T)^{-3/2} \|e^F\Psi \|_\cH^2$ for all $x\in\RR^3$ and 
\begin{align*}
\sup_{x\in\RR^3}\EE\Big[e^{q\int_0^tV_-(B_s+x)\rd s}\Big]^{1/q}\le e^{tc_{p,V_-}'},\quad t\geq 0,\quad\text{for some $c_{p,V_-}'\in[0,\infty)$,}
\end{align*}
because $V_-$ belongs to the Kato class \cite{AS82}. This proves \eqref{L2Linfty}.
Now pick continuous and bounded $\Psi_n\in \cH$, $n\in\NN$, such that $\Psi_n\to\Psi$ in $\cH$ as $n\to\infty$.
Then $A\Psi_n\to A\Psi$ uniformly on $\RR^3$ by \eqref{L2Linfty}, whence it suffices to pick some $n\in\NN$
and prove continuity of $A\Psi_n$. Assume now that $V$ is continuous and compactly supported.
Since $\RR^3\ni x\mapsto\Gamma(e_x)$ is strongly continuous, the integrand in $A\Psi_n$ then is continuous in $x\in\RR^3$ point-wise on $\cX$.
The continuity of $A\Psi_n$ then follows by dominated convergence with $\mathbb{P}$-integrable majorant $e^{\|V_-\|_\infty T}\|J\|\|\Psi_n\|_\infty$.
Thus, $A\Psi$ is continuous, if $V\in C_0(\RR^3,\RR)$.
For general Kato-decomposable $V$ we may, according to \cite[Proposition~2.3 and Lemma~C.6]{BHL2000} choose 
$V_n\in C_0(\RR^3,\RR)$, $n\in\NN$, such that
\begin{align*}
c_{K,q}(n):=\sup_{x\in K}\EE\Big[\big|e^{-\int_0^TV_n(B_s+x)\rd s}-e^{-\int_0^TV(B_s+x)\rd s}\big|^q\Big]^{1/q}
\xrightarrow{\;\;n\to\infty\;\;}0,\quad q\geq 1,
\end{align*}
for all compact $K\subset\RR^3$. Define $A_n\Psi$ by \eqref{defAPsi} with $V_n$ put in place of $V$.
Then an estimation analogous to \eqref{HoelderL2Linfty} with $q>2$ chosen such that $q^{-1}+p^{-1}+2^{-1}=1$ yields
\begin{align*}
\sup_{x\in K}\|A_n\Psi(x)-A\Psi(x)\|_{\cF}\leq c_{K,q}(n)(2\pi T)^{-3/4} \EE[\|J\|^p]^{1/p}\|\Psi\|_{\cH}\xrightarrow{\;\;n\to\infty\;\;}0,
\end{align*}
for all compact $K\subset\RR^3$, where each $A_n\Psi$ is continuous. This proves continuity of $A\Psi$.
\end{proof}


\subsection{Proof of the upper bounds on the spatial decay}\label{ssecubproof}

As promised in the introduction, we next explain how our upper bounds on the spatial decay can be inferred from \cite{HiroshimaMatte2022}.

\begin{proof}[Proof of Proposition~\ref{propub}.]
We denote by $\Sigma_R$ the minimum of the spectrum of the Dirichlet realization of $H$ on $\cG_R:=\{x\in\RR^3|\,|x|>R\}$, \cite{HiroshimaMatte2022}. 
Furthermore, we define $H^0$ by putting $0$ in place of $V$ in the definition of $H$ and set $E^0:=\inf\sigma(H^0)$. 
Then $\Sigma_R\geq E^0+a(R)$ with $a(R):=\inf_{|x|\geq R} V(x)$.
We pick $\zeta\in(0,1)$ such that $1-\zeta=\sqrt{1-\eps}$ and choose $R>1$ so large that
\begin{align*}
 E^0+V-\frac{1-\eps}{2}|\nabla\varrho|^2-\frac{4}{R^2}-\lambda&\geq E^0+\eps a(R)-\frac{4}{R^2}-\lambda\geq 1\quad\text{a.e. on $\cG_R$.}
\end{align*}
Here we used \eqref{eikonaleq} to get the first inequality. Now \cite[Proposition~3.3]{HiroshimaMatte2022} directly implies
\begin{align*}
\|e^{(1-\eps)\varrho}\one_{(-\infty,\lambda]}(H)\|&\leq\frac{c}{\zeta^2}(\Sigma_R-E_{{\rm g}}+1)^2\Big(1+\underset{|x|\leq2R}{{\rm ess\,sup}}|\nabla\varrho(x)|^2\Big)
\exp\Big(\max\limits_{|x|\leq2R}\varrho(x)\Big),
\end{align*}
for some universal constant $c\in(0,\infty)$.

To turn this into a point-wise bound we argue similarly as in \cite[p.~34]{HiroshimaMatte2022}: Since $\varrho$ is not necessarily globally
Lipschitz continuous, we fix $y\in\RR^3$ and pick some 
$f\in C^\infty(\RR,\RR)$ such that $f(r)=r$ for all $r\in(-\infty,\varrho(y)]$, $f(r)=\varrho(y)+1$ for all $r\in[\varrho(y)+2,\infty)$
and $0\le f' \le 1$. Then $\theta:=f\circ\varrho$ is globally Lipschitz continuous and \eqref{eikonaleq} entails
$|\nabla\theta|=f'(\varrho)|\nabla\varrho|=f'(\varrho)\sqrt{2V}\le\sup_{\varrho(z)\leq\varrho(y)+2}\sqrt{2V(z)}$, a.e.
By \eqref{lbvarrho} the lower bound $\varrho(z)\geq\sqrt{2}|z|\geq|z|$ holds for all $z\in\RR^3$, which together with \eqref{extraV}
implies $|\nabla\theta|\leq L$, a.e., with $L:=\sqrt{2D_\delta} e^{\delta(\varrho(y)+2)/2}$. 
Here we fix $\delta>0$ such that $3\delta/2=\eps$. Writing
\begin{align*}
\Psi&=e^{-TH}\one_{(-\infty,\lambda]}(H)\{e^{TH}\one_{(-\infty,\lambda]}(H)\}\Psi,
\end{align*}
we first observe that the continuity statement in Proposition~\ref{propub} follows from \eqref{FKF} and Lemma~\ref{lemcont}.
Using $\varrho(y)=\theta(y)$ and $\theta\leq\varrho$, we further deduce that
\begin{align*}
e^{(1-\eps/2)\varrho(y)}\|\Psi(y)\|_{\cF}&=\|(e^{(1-\eps/2)\theta}\Psi)(y)\|_{\cF}
\\
&\le \|e^{(1-\eps/2)\theta}e^{-TH}e^{-(1-\eps/2)\theta}\|_{2,\infty}
\|e^{(1-\eps/2)\varrho}\one_{(-\infty,\lambda]}(H)\|e^{T\lambda}.
\end{align*}
Here $\|\cdot\|_{2,\infty}$ is the norm on the space of bounded operators from $\cH$ to $L^\infty(\RR^3,\sF)$.
We choose $T:=1/L^2\leq1$. Employing \eqref{momentbdJ}, \eqref{FKF} and \eqref{L2Linfty} (with $V_-=0$, $J=J_{[0,T]}$ and, say, $p=4$), 
we then find $e^{T\lambda}\leq e^{0\vee\lambda}$ and a solely $g$-dependent constant $c_{g}\in(0,\infty)$ such that
\begin{align*}
 \|e^{(1-\eps)\theta}e^{-TH}e^{-(1-\eps)\theta}\|_{2,\infty}
 &\leq c_ge^{3\delta/2}D_\delta^{3/4} e^{3\delta\varrho(y)/4}=ce^{\eps}D_{2\eps/3}^{3/4}e^{\eps\varrho(y)/2}.
\end{align*}
Since $y\in\RR^3$ was arbitrary, these remarks prove \eqref{pwub}.
\end{proof}


\subsection{Positivity of ground state eigenvectors}\label{ssecpos}

Let $L^2(Q,\mu)$ be a $Q$-space representation of $\cF$. By definition, this means that
$(Q, \Sigma,\mu)$ is a probability space and there exists an isomorphism $\sI:\cF\to L^2(Q,\mu)$ having the following properties:
Firstly, $\sI\Omega$ is constantly equal to $1$ on $Q$. Secondly,
every unitarily transformed field operator $\sI \varphi(f)\sI^*$ with $f\in\mathfrak{h}_\RR$ is the maximal
operator of multiplication with a centered Gaussian random variable $\phi(f):Q\to\RR$ having variance $\|f\|^2$.
That is, $\EE_\mu[\exp(z\phi(f))]=\exp(z^2\|f\|^2/2)$ for all $z\in\CC$, where $\EE_\mu$ denotes expectation with respect to $\mu$.
Finally, $\Sigma$ is generated by all $\phi(f)$ with $f\in\mathfrak{h}_{\RR}$.
We refer to \cite{HL20} for constructions of $(Q, \Sigma,\mu)$ and $\sI$.

It has been shown in \cite{Matte2016,MatteMoeller2018} that all operators of the form
$\sI\Gamma(e_x)I_T(f)I_T(g)^*\Gamma(e_{-x})\sI^*$ with $T>0$, $x\in\RR^3$ and $f,g\in\mathfrak{h}_{\RR}$ are positivity improving.
In conjunction with the Feynman--Kac formula this implies that $\sI e^{-TH}\sI^*$ with $T>0$ improves positivity
on $L^2(\RR^3,L^2(Q,\mu))$; see \cite{HiroshimaMatte2022} and \cite{MatteMoeller2018}. (Here we apply $\sI$ point-wise for every $x\in\RR^3$
and $\Psi\in L^2(\RR^3,L^2(Q,\mu))$ is strictly positive, iff $\Psi(x)>0$ in $L^2(Q,\mu)$ for a.e. $x\in\RR^3$.)
According to the Perron--Frobenius--Faris theorem, this implies under the condition \eqref{IRcond} 
that the ground state eigenvector $\gr$ of $H$ can be chosen such that $\sI\gr$ is strictly positive. For the unique continuous representative
$\gr(\cdot)$ of this particular choice of $\gr$, we find
\begin{align}\label{ellpos}
\ell(x)&:=(\Omega,\gr(x))_\cF=\EE_{\mu}[\sI\gr(x)]>0,\quad x\in \RR^3,
\\\label{defchiR}
\chi(R)&:=\inf_{|z|\leq R}\ell(z)>0,\quad R>0.
\end{align}


\section{Point-wise lower bound on the spatial decay}\label{secproof}

\noindent
In this section we prove Theorem~\ref{mainthmvr}.
The main steps of the proof are presented in Section~\ref{ssecgenlowerbound}, where
arguments from \cite{CS81} are combined with
new bounds on contributions coming from the radiation field and the exponential moment bounds of Proposition~\ref{fkf}(i).
The final result of Section~\ref{ssecgenlowerbound} is a general lower bound on $\ell(x)=(\Omega,\gr(x))_\cF$
involving a Lipschitz path from $x$ to the origin. Optimizing over such paths as in \cite{CS81}, 
we then encounter the Agmon distance, as explained in Section~\ref{s4}.

\subsection{A general lower bound on the spatial decay}\label{ssecgenlowerbound}

Throughout this section we assume that the infrared regularity condition \eqref{IRcond} is fulfilled and we only
consider continuous potentials $V$ satisfying $V(x)\to\infty$ as $|x|\to\infty$, so that $E_{\rm g}$ is an eigenvalue.
We choose the ground state eigenvector $\gr$ of $H$ such that $\sI\gr$ is strictly positive and
denote by $\gr(\cdot):\RR^3\to\cF$ its continuous representative.
Then \eqref{ellpos} holds, and to obtain \eqref{pwlb} it suffices to bound $\ell(x)$ from below. To this end we exploit that
\begin{align}\label{eva1}
\ell(x)&=\EE\Big[e^{-\int_0^T(V(B_s+x)-\eg)\rd s} (\Omega,\Gamma(e_{x})J_{[0,T]}\Gamma(e_{-x})\gr(B_T+x))_{\cF}\Big],
\end{align}
for all $x\in\RR^3$ by Proposition~\ref{fkf} and the relation $\gr=e^{-T\hn+T\eg}\gr$, where furthermore
\begin{align}\label{eva2}
(\Omega,\Gamma(e_{x})J_{[0,T]}\Gamma(e_{-x})\gr(y))_{\cF}
&=e^{S_{T}-\frac{1}{2}\|\tilde{U}_T\|^2}(e^{-\varphi(e_x\tilde{U}_T)}\Omega,\gr(y)),\quad y\in\RR^3,
\end{align}
on $\cX$ by \eqref{ellrel}, \eqref{defJT} and the spectral calculus for $\varphi(f)$.
The appearance of the field operator on the right hand side of \eqref{eva2} permits to apply Jensen's inequality in $Q$-space:

\begin{lem}\label{ag00-r}
Let $y\in\RR^3$ and $f\in \mathfrak{h}_{\RR}$. Then
\begin{align}\label{eva3}
(e^{-\varphi(f)}\Omega, \gr(y))_\cF&\geq\ell(y)\exp\bigg(-\frac{\|f\|\|\gr\|_\infty}{\ell(y)}\bigg).
\end{align}
\end{lem}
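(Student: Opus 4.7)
The plan is to pass to the $Q$-space representation of $\cF$ set up in Section~\ref{ssecpos}, so that the claim becomes a statement about a Gaussian integral amenable to Jensen's inequality. Under the isomorphism $\sI$, the vacuum $\Omega$ is identified with the constant function $1$ and $\varphi(f)$ with multiplication by the centered Gaussian $\phi(f)$ of variance $\|f\|^2$; consequently $e^{-\varphi(f)}\Omega$ is identified with $e^{-\phi(f)}$. I will therefore rewrite the left-hand side of \eqref{eva3} as
\begin{align*}
(e^{-\varphi(f)}\Omega,\gr(y))_\cF=\EE_\mu\bigl[e^{-\phi(f)}\,\sI\gr(y)\bigr],
\end{align*}
the integrand being in $L^1(\mu)$ by Cauchy--Schwarz, since $\phi(f)$ has all Gaussian exponential moments and $\sI\gr(y)\in L^2(\mu)$.

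The key step is a change of measure. Section~\ref{ssecpos} provides $\sI\gr(y)>0$ $\mu$-a.s., and its $\mu$-integral equals $\ell(y)>0$, so $\sI\gr(y)/\ell(y)$ is the density of a probability measure $\mu_y$ on $Q$. The displayed integral then equals $\ell(y)\,\EE_{\mu_y}[e^{-\phi(f)}]$, and Jensen's inequality applied to the convex function $t\mapsto e^{-t}$ yields
\begin{align*}
\EE_{\mu_y}[e^{-\phi(f)}]\geq\exp\bigl(-\EE_{\mu_y}[\phi(f)]\bigr)\geq\exp\bigl(-\bigl|\EE_{\mu_y}[\phi(f)]\bigr|\bigr).
\end{align*}
It then remains to bound $|\EE_{\mu_y}[\phi(f)]|=|\EE_\mu[\phi(f)\,\sI\gr(y)]|/\ell(y)$ from above. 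I will do this by Cauchy--Schwarz in $L^2(\mu)$: the Gaussian identity $\|\phi(f)\|_{L^2(\mu)}=\|f\|$ together with the unitarity of $\sI$, which gives $\|\sI\gr(y)\|_{L^2(\mu)}=\|\gr(y)\|_\cF\leq\|\gr\|_\infty$, produces the upper bound $\|f\|\|\gr\|_\infty/\ell(y)$. Assembling the estimates delivers \eqref{eva3}.

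No substantial obstacle is anticipated; the argument is a textbook combination of Jensen and Cauchy--Schwarz. The one delicate input---the strict positivity of $\sI\gr(y)$, which legitimizes the change of measure to $\mu_y$---is already available from the Perron--Frobenius--Faris argument recalled in Section~\ref{ssecpos}, and the use of the sup-norm $\|\gr\|_\infty$ in the final bound is harmless because Proposition~\ref{propub} (applied to $\Psi=\gr$) guarantees boundedness of the continuous representative of $\gr$ under the standing confining hypothesis on $V$.
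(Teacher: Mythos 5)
Your proposal is correct and follows essentially the same route as the paper: pass to the $Q$-space representation, use strict positivity of $\sI\gr(y)$ to invoke Jensen's inequality with the normalized density $\sI\gr(y)/\ell(y)$, and then bound the resulting linear term by Cauchy--Schwarz. The only cosmetic difference is that the paper evaluates $(\phi(f),\sI\gr(y))_{L^2(Q,\mu)}=(\add(f)\Omega,\gr(y))_\cF$ and then applies Cauchy--Schwarz in Fock space, whereas you apply Cauchy--Schwarz directly in $L^2(\mu)$ using $\|\phi(f)\|_{L^2(\mu)}=\|f\|$; these are the same estimate.
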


\begin{proof}
Let $\sI:\cF\to L^2(Q,\mu)$ be the isomorphism introduced in Section~\ref{ssecpos}, so that
$\sI \varphi(f)\sI^*$ becomes multiplication with the Gaussian random variable $\phi(f)$, $\sI\Omega=1$ and $\sI\gr(y)>0$, $\mu$-almost surely.
By \eqref{ellpos} and Jensen's inequality we then find
\begin{align*}
\frac{(e^{-\phi(f)} , \sI\gr(y))_{L^2(Q,\mu)}}{\ell(y)}
\geq
\exp\bigg(-\frac{(\phi(f) , \sI\gr(y))_{L^2(Q,\mu)}}{\ell(y)}\bigg).
\end{align*}
Here  $(\phi(f) , \sI\gr(y))_{L^2(Q,\mu)}=(\add(f)\Omega,\gr(y))_{\cF}\le \|f\|\|\gr\|_\infty$ holds by \eqref{defadagger}, \eqref{defavp} 
and the relation $a(f)\Omega=0$.
\end{proof}

Recall the notation \eqref{defVplus} and \eqref{defchiR} before reading the next lemma.

\begin{lem}\label{ag1-r}
Denote by $\tau>0$ the smallest eigenvalue of $-1/2$ times the Dirichlet--Laplacian on the unit ball in $\RR^3$. Abbreviate 
\begin{align}\label{defalpha}
\alpha:=\chi(1) \exp\bigg(-\frac{\|\gr\|_\infty^2}{2\chi(1)^2}\bigg).
\end{align} 
Then there exist universal constants $c,C\in(0,\infty)$
such that, for all $T>0$, $x\in\RR^3$ and $q\in {\rm Lip}([0,T],\RR^3)$ with $q(0)=x$ and $q(T)=0$ and for all $p\in(1,\infty)$,
\begin{align}\label{mick-r}
\ell(x)&\geq\frac{\alpha}{C^{p-1}}\cdot
e^{-\frac{p}{2}\int_0^T|\dot q(s)|^2\rd s-\int_0^TV_{\circ}(q(s))\rd s+T(\eg-p\tau)-c(g^2+\frac{g^4}{p-1})(1\vee T)}.
\end{align}
\end{lem}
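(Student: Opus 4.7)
The plan is to combine the Feynman--Kac representation from Proposition~\ref{fkf} with a Carmona--Simon-style tube restriction along $q$, to process the Fock-space integrand via the positivity bound of Lemma~\ref{ag00-r}, and to extract the sharp coefficient of $\int|\dot q|^2$ by a reverse Hölder inequality followed by a Cameron--Martin shift and a reflection symmetry argument.

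I would begin from $\ell(x)=(\Omega,\gr(x))_{\cF}$, combine the identity $\gr=e^{-T(H-\eg)}\gr$ with the Feynman--Kac formula \eqref{FKF} and the factorisation \eqref{eva2}, and apply Lemma~\ref{ag00-r} pointwise on $\cX$ with $f=e_x\tilde U_T$ and $y=B_T+x$. Then I would restrict the expectation to the tube
\begin{align*}
A:=\{|B_s+x-q(s)|\leq 1\text{ for all }s\in[0,T]\}.
\end{align*}
On $A$ one has $V(B_s+x)\leq V_\circ(q(s))$ by the very definition of $V_\circ$, and $|B_T+x|\leq 1$ since $q(T)=0$, hence $\ell(B_T+x)\geq\chi(1)$. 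The AM--GM inequality $\|\tilde U_T\|\|\gr\|_\infty/\chi(1)\leq\tfrac12\|\tilde U_T\|^2+\|\gr\|_\infty^2/(2\chi(1)^2)$ then linearizes the exponent coming from Lemma~\ref{ag00-r} and folds every Fock-space factor into a clean form, producing
\begin{align*}
\ell(x)\geq\alpha\, e^{T\eg-\int_0^T V_\circ(q(s))\rd s}\,\EE\bigl[\one_A\, e^{S_T-\|\tilde U_T\|^2}\bigr].
\end{align*}

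The key step is to bound the remaining expectation from below via a reverse Hölder inequality: for $p>1$, factoring $\one_A=\one_A\, e^{(S_T-\|\tilde U_T\|^2)/p}\,e^{-(S_T-\|\tilde U_T\|^2)/p}$ and applying Hölder yields
\begin{align*}
\EE\bigl[\one_A\, e^{S_T-\|\tilde U_T\|^2}\bigr]\geq\frac{\PP(A)^p}{\EE\bigl[e^{-S_T/(p-1)+\|\tilde U_T\|^2/(p-1)}\bigr]^{p-1}}.
\end{align*}
I would estimate the denominator by Cauchy--Schwarz together with the exponential moment bounds of Proposition~\ref{fkf}(i) applied with parameter $2/(p-1)$, obtaining after raising to $p-1$ a bound of the form $\leq C^{p-1}\exp\bigl(c(g^2+g^4/(p-1))(1\vee T)\bigr)$. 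For the tube probability $\PP(A)$, I would apply Cameron--Martin with drift shift $h(s)=q(s)-x$, giving $\PP(A)=\EE[\one_{\tilde A}\,R]$ where $\tilde A:=\{|B_s|\leq 1\text{ for all }s\in[0,T]\}$ and $R:=\exp\bigl(-\int_0^T\dot q\cdot\rd B_s-\tfrac12\int_0^T|\dot q|^2\rd s\bigr)$. Jensen's inequality on the conditional expectation provides $\PP(A)\geq\PP(\tilde A)\exp(\EE[\log R\mid\tilde A])$; the deterministic part of $\log R$ yields $-\tfrac12\int_0^T|\dot q|^2$, while $\EE[\int_0^T\dot q\cdot\rd B_s\mid\tilde A]=0$ by reflection symmetry, since both the event $\tilde A$ and the law of $B$ are invariant under $B\mapsto-B$ whereas the Itô integral is odd. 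Combined with the standard exit-time estimate $\PP(\tilde A)\geq c_0 e^{-T\tau}$ coming from the Dirichlet Laplacian on the unit ball, raising to the $p$-th power produces $\PP(A)^p\geq c_0^p\exp\bigl(-Tp\tau-\tfrac{p}{2}\int_0^T|\dot q|^2\bigr)$. Assembling all contributions and absorbing the universal factor $c_0^p$ into the $C^{p-1}$ prefactor yields~\eqref{mick-r}.

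The main delicate point is to arrange the Hölder split so that $S_T$ and $\tilde U_T$ remain evaluated on the unshifted Brownian motion (so that Proposition~\ref{fkf}(i) applies to the denominator directly), while the Cameron--Martin shift is applied only to the geometric indicator $\one_A$. The reflection symmetry argument guaranteeing $\EE[\int_0^T\dot q\cdot\rd B_s\mid\tilde A]=0$ is what makes the Cameron--Martin density contribute only the deterministic penalty $\tfrac12\int_0^T|\dot q|^2$, so that after the $p$-th power in the reverse Hölder step one obtains the sharp coefficient $\tfrac{p}{2}$ required for the subsequent $p\downarrow 1$ limit used to optimize the trial path in Section~\ref{s4}.
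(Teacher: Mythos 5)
Your proof matches the paper's argument step for step: the same combination of \eqref{eva1}--\eqref{eva3} with the tube restriction, the same AM--GM linearization of the Jensen exponent on the event where $\ell(B_T+x)\ge\chi(1)$, the same reverse H\"{o}lder split isolating $\one_{M_T}e^{-\int(V-\eg)/p}$ from $e^{S_T-\|\tilde U_T\|^2}$, the same appeal to Proposition~\ref{fkf}(i) for the denominator, and the same Cameron--Martin/Jensen/reflection argument for $\PP[M_T]$ together with the Dirichlet eigenvalue estimate for $\PP[Q_T]$. The only cosmetic deviation is your extra constant $c_0$ in $\PP[Q_T]\ge c_0e^{-\tau T}$; the bound in fact holds with $c_0=1$ (as used in the paper, following Carmona--Simon), which is what lets the prefactor come out exactly as $\alpha/C^{p-1}$.
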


\begin{proof}
We fix $x\in\RR^3$ and $T>0$ and pick $q\in {\rm Lip}([0,T],\RR^3)$ such that $q(0)=x$ and $q(T)=0$. Introducing the events
\begin{align*}
M_T:=\Big\{\sup_{s\in[0,T]}|B_s+x-q(s)|\leq 1\Big\},\quad Q_T:=\Big\{\sup_{s\in[0,T]}|B_s|\leq 1\Big\},
\end{align*}
and combining \eqref{eva1} through \eqref{eva3}, we find
\begin{align*}
\ell(x)&\geq
\EE\bigg[\one_{M_T}e^{-\int_0^T(V(B_s+x)-\eg)\rd s}
e^{S_{T}-\frac{1}{2}\|\tilde{U}_{T}\|^2}\exp\Big(-\frac{\|\tilde{U}_{T}\|\|\gr\|_\infty}{\ell(B_T+x)}\Big)\ell(B_T+x)\bigg].
\end{align*}
Note that $|B_T+x|\leq1$ and, hence, $\ell(B_T+x)\geq \chi(1)$ on $M_T$. Thus,
\begin{align*}
\ell(x)&\geq\alpha\cdot\EE\bigg[\one_{M_T}e^{-\int_0^T(V(B_s+x)-\eg)\rd s}
e^{S_{T}-\|\tilde{U}_{T}\|^2}\bigg].
\end{align*}
Let $p\in(1,\infty)$ and $p'$ be the exponent conjugate to $p$. Then H\"{o}lder's inequality yields
\begin{align}\label{adam1}
\ell(x)&\geq \alpha\cdot \frac{\EE\Big[\one_{M_T}e^{-\int_0^T(V(B_s+x)-\eg)\rd s/p}\Big]^{p}}{
\EE\Big[e^{(p'/p)\|\tilde{U}_{T}\|^2-(p'/p)S_{T}}\Big]^{p/p'}}.
\end{align}
Since $p/p'=p-1$, the denominator in \eqref{adam1} can be bounded from above by
\begin{align*}
\EE\Big[e^{(p'/p)\|\tilde{U}_{T}\|^2-(p'/p)S_{T}}\Big]^{p/p'}
&=\EE\Big[e^{\|\tilde{U}_{T}\|^2/(p-1)-S_{T}/(p-1)}\Big]^{p-1}
\leq C^{p-1}e^{c(g^2+cg^4/(p-1))(1\vee T)},
\end{align*}
where the inequality follows from the exponential moment bounds of Proposition~\ref{fkf}(i).
Using that $V(B_s+x)\le V_{\circ}(q(s))$ on $M_T$, we estimate the numerator in \eqref{adam1} from below as
\begin{align*}
\EE\Big[\one_{M_T}e^{-\int_0^T(V(B_s+x)-\eg)\rd s/p}\Big]^{p}
&\geq e^{-\int_0^TV_{\circ}(q(s))\rd s+T\eg}\PP[M_T]^p
\end{align*}
To conclude, we recall the bounds
\begin{align}\label{CSbound}
\PP[M_T]&\geq e^{-\half\int_0^T|\dot{q}(s)|^2\rd s}\PP[Q_T],\quad\PP[Q_T]\geq e^{-\tau T},
\end{align}
from Lemma~2.2 and~2.3, respectively, in  \cite{CS81}.
\end{proof}

For completeness we recall from \cite{CS81} how the first inequality in \eqref{CSbound} is obtained: 
Consider the martingale $m=(m_t)_{t\in[0,T]}$ given by $m_t:=-\int_0^t\dot{q}(s)\rd B_s$, $t\in[0,T]$, 
which has the deterministic quadratic variation $\langle m\rangle_t=\int_0^t|\dot{q}(s)|^2\rd s$, $t\in[0,T]$. 
As a consequence of Girsanov's theorem or the Cameron--Martin theorem,
the law of the process $(B_t)_{t\in[0,T]}$ under $\PP$ is the same as the law of 
$(B_t+\int_0^t\dot{q}(s)\rd s)_{t\in[0,T]}=(B_t+q(t)-x)_{t\in[0,T]}$
under the probability measure $\xi_T\odot\PP$ whose density $\xi_T:=\exp(m_T-\langle m\rangle_T/2)$ is the stochastic  
exponential of $m$ at $T$. Thus, 
\begin{align*}
\PP[M_T]&=\EE[\xi_T\one_{Q_T}]
\ge \PP[Q_T]\exp\bigg(\frac{\EE[(m_T-\langle m\rangle_T/2)\one_{Q_T}]}{\PP[Q_T]}\bigg)
=e^{-\langle m\rangle_T/2}\PP[Q_T],
\end{align*}
where we applied Jensen's inequality in the second step. In the last one we used $\EE[m_T\one_{Q_T}]=0$, which holds
since $B$ has the same law under $\PP$ as $-B$.


\subsection{Optimal paths and geodesic distance}\label{s4}

Let $x\in\RR^3\setminus\{0\}$. Lemma~\ref{ag1-r}, where $p>1$ is arbitrary, leads to the minimization of the classical action functional
\begin{align}\label{defAV}
&\sA_V(q):=\int_0^T\lk \half |\dot q(t)|^2+V(q(t))\rk \rd t,
\end{align}
with $V_{\circ}$ is put in place of $V$ and the minimization running over all $T>0$ and all paths
\begin{align}\label{defcCast}
q\in\cC^\ast_T(x):=\{q\in {\rm Lip}([0,T],\RR^3)\mid q(0)=x, q(T)=0\}.
\end{align}
It is well-known that $\varrho(x)=\tilde{\varrho}(x)$ with $\varrho$ given by \eqref{defcCT} through \eqref{defvrx} and
\begin{align}\label{miniA}
\tilde{\varrho}(x)&:=\inf\{\sA_V(q)|\,q\in\cC^\ast_T(x),\,T>0\}.
\end{align}
Moreover, there exist $T(x)>0$ and $q_x\in\cC^\ast_{T(x)}(x)$ minimizing the variational problem \eqref{miniA}, 
so that $\varrho(x)=\sA_V(q_x)$, and such that $T(x)$ and $\varrho(x)$ are related by the inequality in Lemma~\ref{lemCSTvr}; 
see \cite[\textsection2]{CS81} for proofs or Appendix~\ref{appdist} and in particular Lemma~\ref{lemJac}.

\begin{proof}[Proof of Theorem~\ref{mainthmvr}.]
Let $\eps>0$ and abbreviate $c_{*}:=\max\{0,(1+\eps/2)\tau+c(g^2+2g^4/\eps)-\eg\}$. Put $a_\circ(R):=\min_{|z|\geq R}V_{\circ}(z)$, $R>0$, and
choose $R_0>0$ so large that $c_{*}/(2a_\circ(R_0))<\eps/4$. After that choose $R_1>R_0$ so large that
$c_{*}\max_{|z|=R_0}\varrho_{\circ}(z)/(\sqrt{8}R_1)<\eps/4$. Pick $x\in\RR^3$ with $|x|\geq R_1$.
Also $V_{\circ}$ satisfies \eqref{hypV}, whence all the above remarks on the variational problems \eqref{defvrx} and \eqref{miniA}
apply to $V_{\circ}$ as well. Hence, we find $T_{\circ}(x)>0$ and $q_x^\circ\in\cC^\ast_{T_\circ(x)}(x)$ such that $\varrho_\circ(x)=\sA_{V_\circ}(q_x^\circ)$
and $T_\circ(x)$ and $\varrho_\circ(x)$ are related by the inequality in Lemma~\ref{lemCSTvr} applied to $V_\circ$.
By the above choices of $R_0$ and $R_1$, the latter inequality implies $c_{*}T_\circ(x)<\eps\varrho_\circ(x)/2$.
Plugging $p:=1+\eps/2$ and the minimizing pair $(q_x^\circ,T_\circ(x))$ into \eqref{mick-r}, we thus find
\begin{align*}
\ell(x)\geq C_{*}\exp\big(-(1+\eps/2)\varrho_{\circ}(x)-c_{*}T_\circ(x)\big)\geq C_{*}\exp(-(1+\eps)\varrho_\circ(x)),
\end{align*}
with $C_{*}:=\alpha e^{-c(g^2+2g^4/\eps)}/C^{\eps/2}$.
If $|x|<R_1$, then $\ell(x)\geq \chi(R_1)\exp(-(1+\eps)\varrho_\circ(x))$ with $\chi(R_1)$ given by \eqref{defchiR}.
\end{proof}


\appendix

\section{Basic properties of the geodesic distance}\label{appdist}

\noindent
In this appendix we collect well-known facts related to the Agmon distance which have been used in the main text,
and give references for all results whose proofs are not repeated here.

In what follows we merely assume that $V:\RR^3\to\RR$ is continuous with $V\geq1$, and
$X\subset\RR^3$ is convex and non-empty. Recalling \eqref{defcCT} we introduce the path space
\begin{align*}
\cC(x,y;X):=\{\gamma\in\sC_1(x,y)|\,\gamma([0,1])\subset \overline{X}\},\quad x,y\in\RR^3,
\end{align*}
as well as an abbreviation for the Lagrangian,
\begin{align*}
F(x,v):=\sqrt{2V(x)}|v|,\quad x,v\in\RR^3.
\end{align*}
For all $x,y\in X$, we finally generalize \eqref{defvrx} by
\begin{align}\label{defvrxgen}
d_X(x,y):=\inf_{\gamma\in\cC(x,y;X)}\sL_V(\gamma)=\inf_{\gamma\in\cC(x,y;X)}\int_0^1F(\gamma(s),\dot\gamma(s))\rd s.
\end{align}

\begin{lem}\label{lemdX}
Let $x,y\in X$, $x\not=y$. 
Then the parametric variational problem \eqref{defvrxgen} has a (not necessarily unique) 
minimizer $\gamma_{x,y;X}\in\cC(x,y;X)$ which can be chosen such that $F(\gamma_{x,y;X},\dot{\gamma}_{x,y;X})=d_X(x,y)>0$ a.e. on $[0,1]$.
\end{lem}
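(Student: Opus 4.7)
The plan is to apply the direct method of the calculus of variations to $\sL_V$, exploiting its invariance under Lipschitz reparametrizations both to recover compactness of minimizing sequences and to enforce the normalized-speed property at the end.

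Preliminarily, $d_X(x,y)$ is finite because the affine segment $\gamma(s)=(1-s)x+sy$ lies in $\cC(x,y;X)$ by convexity of $X$, and strictly positive because $V\geq 1$ yields $\sL_V(\gamma)\geq\sqrt{2}|x-y|>0$ for every admissible $\gamma$. Fix a minimizing sequence $(\gamma_n)\subset\cC(x,y;X)$ with $\sL_V(\gamma_n)\to d_X(x,y)$. Reparametrizing each $\gamma_n$ proportionally to Euclidean arc length (which preserves $\sL_V$ and the image constraint) we may assume $|\dot\gamma_n|$ is a.e.\ equal to the Euclidean length $\ell_n:=\int_0^1|\dot\gamma_n|\rd s$. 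The bound $\sqrt{2V}\geq\sqrt{2}$ forces $\ell_n\leq\sL_V(\gamma_n)/\sqrt{2}$, which is uniformly bounded, so $(\gamma_n)$ is equi-Lipschitz with values in $\overline X$. By Arzel\`a--Ascoli, a subsequence converges uniformly to a Lipschitz $\gamma^\star:[0,1]\to\overline X$ satisfying $\gamma^\star(0)=x$ and $\gamma^\star(1)=y$, hence $\gamma^\star\in\cC(x,y;X)$; simultaneously $\dot\gamma_n\overset{*}{\rightharpoonup}\dot\gamma^\star$ in $L^\infty([0,1],\RR^3)$.

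The crucial step, which I expect to be the main obstacle, is sequential lower semicontinuity: $\sL_V(\gamma^\star)\leq\liminf_n\sL_V(\gamma_n)=d_X(x,y)$. The integrand $F(x,v)=\sqrt{2V(x)}|v|$ is jointly continuous on $\RR^3\times\RR^3$, nonnegative, and convex in $v$ for each fixed $x$. Combined with uniform convergence $V(\gamma_n)\to V(\gamma^\star)$ on $[0,1]$ and the weak-$*$ convergence $\dot\gamma_n\overset{*}{\rightharpoonup}\dot\gamma^\star$, a Tonelli--Serrin type lower semicontinuity theorem for integrands convex in the velocity variable delivers the required inequality. Together with $\sL_V(\gamma^\star)\geq d_X(x,y)$ from $\gamma^\star\in\cC(x,y;X)$, this yields $\sL_V(\gamma^\star)=d_X(x,y)$.

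It remains to upgrade $\gamma^\star$ to the required constant-$F$-speed parametrization $\gamma_{x,y;X}$. Let $L:=d_X(x,y)>0$, let $\sigma:[0,1]\to[0,L]$ denote the $\sL_V$-arc-length function $\sigma(t):=\int_0^t F(\gamma^\star(u),\dot\gamma^\star(u))\rd u$, and collapse any plateaus on which $\sigma$ is constant (i.e., intervals on which $\gamma^\star$ is constant) to single points; this produces an arc-length parametrization $\tilde\gamma:[0,L]\to\overline X$ with $\tilde\gamma(0)=x$, $\tilde\gamma(L)=y$, and $F(\tilde\gamma,\dot{\tilde\gamma})=1$ a.e., so that $|\dot{\tilde\gamma}|=1/\sqrt{2V(\tilde\gamma)}\leq 1/\sqrt{2}$ a.e. Setting $\gamma_{x,y;X}(s):=\tilde\gamma(Ls)$ yields a Lipschitz path in $\cC(x,y;X)$ satisfying $F(\gamma_{x,y;X},\dot\gamma_{x,y;X})=L=d_X(x,y)>0$ a.e., which completes the proof. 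The only subtle technical point in this last paragraph is the existence and properties of the arc-length reparametrization, which is standard for absolutely continuous curves and relies on the chain rule for compositions of Lipschitz functions with monotone absolutely continuous ones.
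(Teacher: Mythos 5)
Your proof is correct and takes essentially the same route as the paper, which simply cites Giaquinta--Hildebrandt for the direct-method argument (the paper explicitly remarks that those proofs rest on reparametrization invariance and lower semicontinuity). You have spelled out the details that the paper delegates to the reference.
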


\begin{proof}
Since $F$ is continuous on $\RR^6$,
$F(x,v)\ge\sqrt{2}|v|$ for all $x,v\in\RR^3$ and $F(x,\cdot)$ is convex and smooth on $\RR^3\setminus\{0\}$ for every $x\in\RR^3$, 
all assertions follow directly from, e.g., \cite[Theorem~1 and Remark~2 in \textsection8.4.2]{GiaquintaHildebrandt2}.
The proofs given there are based on invariance under re-parametrizations and lower semi-continuity arguments.
\end{proof}

Since two minimizers $\gamma_{x,z;X}$ and $\gamma_{z,y;X}$ can be combined and re-parametrized, we clearly have
the triangle inequality $d_X(x,y)\leq d_X(x,z)+d_X(z,y)$ for all $x,y,z\in X$. For all $R>0$ and $x,y,z\in X$
satisfying $|x|,|y|\leq R$, this implies the local Lipschitz bound
\begin{align}\label{dXLip}
|d_X(x,z)-d_X(y,z)|&\leq d_X(x,y)\leq\int_0^1\sqrt{2V(sx+(1-s)y)}|x-y|\rd s\le L_R|x-y|,
\end{align}
with $L_R:=\max_{|z|\leq R}\sqrt{2V(z)}$. Another simple bound that will be useful in a moment is
\begin{align}\label{lbvarrho}
d_X(x,y)\geq \sqrt{2}\int_0^t|\dot{\gamma}_{x,y;X}(s)|\rd s\geq\sqrt{2}|\gamma_{x,y;X}(t)-y|,\quad t\in[0,1],
\end{align}
for all $x,y\in X$, $x\not=y$, where $V\geq1$ was used.

Our assumptions on $V$ are translation invariant, whence we shall restrict our attention to the
distance to the origin $\varrho$ given by \eqref{defvrx} in what follows.

\begin{lem}\label{lemeikonal}
$\varrho$ is locally Lipschitz continuous on $\RR^3$ and satisfies \eqref{eikonaleq}.
\end{lem}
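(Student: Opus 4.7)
The plan is to verify the two claims of Lemma~\ref{lemeikonal}---local Lipschitz continuity and the eikonal identity \eqref{eikonaleq}---separately.

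For local Lipschitz continuity, I specialize the estimate \eqref{dXLip} already derived in this appendix to $X = \RR^3$ and $y = 0$: on any ball $\{|x| \le R\}$ I obtain $|\varrho(x) - \varrho(x')| \le L_R |x - x'|$ with $L_R = \max_{|z| \le R} \sqrt{2V(z)} < \infty$ by continuity of $V$. Rademacher's theorem then supplies differentiability almost everywhere on $\RR^3$.

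For the eikonal equation, I recognize $\varrho$ as the value function of the parametric variational problem from Lemma~\ref{lemdX}, whose Lagrangian $F(x, v) = \sqrt{2V(x)} |v|$ is continuous, coercive and convex in $v$, equivalently (via the Jacobi duality) the value function of the unparametrized action problem with Hamiltonian $H(x, p) = \tfrac{1}{2}|p|^2 - V(x)$. By standard first-order Hamilton--Jacobi theory---specifically \cite[Theorem~5.2]{Lions1982}---$\varrho$ is then the unique continuous viscosity solution on $\RR^3 \setminus \{0\}$ of $\tfrac{1}{2}|\nabla u|^2 = V$ subject to $u(0) = 0$. A locally Lipschitz viscosity solution of a first-order equation automatically satisfies the equation classically at every point of differentiability (any smooth test function touching $\varrho$ from above or below at a differentiability point necessarily has gradient equal to $\nabla \varrho$ there, so the sub- and super-solution inequalities collapse to equality). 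Combining this with Rademacher delivers \eqref{eikonaleq}.

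The main obstacle---should one want a self-contained argument sidestepping viscosity theory---is the lower bound $|\nabla \varrho(x)|^2 \ge 2V(x)$. The upper bound is elementary: for $x \ne 0$ and $v$ small enough that $0 \notin [x, x+v]$, concatenating an almost-minimizer from $0$ to $x$ with the straight segment $s \mapsto x + sv$ yields $\varrho(x+v) - \varrho(x) \le \int_0^1 \sqrt{2V(x + sv)} |v| \, ds$, and continuity of $V$ gives $|\nabla \varrho(x)| \le \sqrt{2V(x)}$ at every point of differentiability. The reverse inequality is naturally read off along an arclength-parameterized minimizer $\gamma_x$ supplied by Lemma~\ref{lemdX} via the dynamic programming identity $\varrho(\gamma_x(t)) = t \varrho(x)$ and Cauchy--Schwarz, but propagating it from ``almost every interior point of each individual geodesic'' to ``almost every point of $\RR^3$'' requires a nontrivial disintegration-type argument. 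This measure-theoretic bookkeeping is precisely what the viscosity framework handles transparently, which is why invoking \cite[Theorem~5.2]{Lions1982} is the most efficient route and why I would present the proof in that form.
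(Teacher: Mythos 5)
Your proposal is essentially the paper's own argument: both derive local Lipschitz continuity from \eqref{dXLip}, invoke \cite[Theorem~5.2]{Lions1982} to identify $\varrho$ as a viscosity solution of the eikonal equation, and pass from viscosity solution plus Lipschitz regularity to the a.e.\ classical identity \eqref{eikonaleq} (the paper cites \cite[Remark~1.11]{Lions1982} for this last step, where you argue it directly via sub- and superdifferentials at points of differentiability). The one detail the paper handles that you elide is a localization: before applying Lions' theorem (which is set on bounded domains) they use \eqref{lbvarrho} to show every minimizer from $x\in B_R$ to the origin stays inside a fixed ball $B_M$, so that $\varrho$ agrees on $B_R$ with the constrained distance $d_{B_M}(\cdot,0)$, which is globally Lipschitz on $B_M$; that reduction is what licenses the application of \cite[Theorem~5.2]{Lions1982}.
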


\begin{proof}
Let $B_r\subset\RR^3$ be the open ball of radius $r>0$ about $0$. Fix $R\in\NN$.
If $x\in B_R\setminus\{0\}$ and $M:=\max_{|z|\leq R}\varrho(z)\geq \sqrt{2}R$, then \eqref{lbvarrho} shows 
that the image of any minimizer $\gamma_{x,0;\RR^3}$ is contained in $B_{M}$. 
This implies $\varrho=\varrho_{M}:=d_{B_M}(\cdot,0)$ on $B_R$. For the variational problems defining $\varrho_{M}$ involve
smaller sets of paths than those defining $\varrho$, so that $\varrho\leq\varrho_{M}$ on $B_M$.
It follows directly from \cite[Theorem~5.2]{Lions1982} that $\varrho_{M}$ is a viscosity solution of the eikonal equation on $B_M\setminus\{0\}$.
Since $\varrho_{M}$ is Lipschitz continuous by \eqref{dXLip}, this
shows \cite[Remark~1.11]{Lions1982} that $|\nabla\varrho_M|^2=2V$ a.e. on $B_M\setminus\{0\}$.
Since $R\in\NN$ was arbitrary, this implies \eqref{eikonaleq}.
\end{proof}

Let $x\in\RR^3\setminus\{0\}$. We shall next explain why Jacobi's geometric least action principle works in our situation,
i.e., why minimizers of the variational problem for $\varrho(x)$ yield upon re-parametrization minimizers
of the variational problem for $\tilde{\varrho}(x)$ given by \eqref{defAV} through \eqref{miniA}. We follow the discussion in \cite{CS81}
which exploits that the Lagrange function in \eqref{defAV} depends quadratically on the velocities $\dot{q}(t)$; 
see, e.g., \cite[\textsection5.3]{Lions1982} for generalizations.

Given any $T>0$ and $q\in \cC^\ast_T(x)$, the re-parametrization $\gamma_{q}(s):=q(T(1-s))$, $s\in[0,1]$, defines a path $\gamma_{q}\in\cC_1(x,0)$.
Now, for $a,b\geq0$, we have $\sqrt{2}ab\leq a^2+b^2/2$ with equality if and only if $\sqrt{2}a=b$. This and the invariance of
$\sL_V$ under re-parametrizations implies
\begin{align*}
\sL_V(\gamma_{q})=\int_0^T\sqrt{2V(q(t))}|\dot{q}(t)|\rd t\leq\sA_V(q),
\end{align*}
with equality if and only if 
\begin{align}\label{iffLA}
\sqrt{2V(q(t))}=|\dot{q}(t)|,\quad \text{a.e. $t\in[0,T]$.}
\end{align}
Thus, $\varrho(x)\leq \tilde{\varrho}(x)$. 

By virtue of Lemma~\ref{lemdX}, we next pick $\gamma_x:=\gamma_{x,0;\RR^3}\in\cC_1(x,0)$ satisfying $F(\gamma_x,\dot{\gamma}_x)=\varrho(x)$, a.e.,
and in particular $\varrho(x)=\sL_V(\gamma_x)$. Setting
\begin{align}\label{defTx}
\tau(s):=\int_0^s\frac{\varrho(x)}{2V(\gamma_x(r))}\rd r,\quad s\in[0,1],\quad T(x):=\tau(1),
\end{align}
we see that $\tau$ is $C^1$ with $\inf\tau'>0$, whence it has an inverse function $\sigma:=\tau^{-1}:[0,T(x)]\to[0,1]$
with the same properties. Defining $q_x(t):=\gamma_x(\sigma(T(x)-t))$, $t\in[0,T(x)]$, and using that $F(\gamma_x,\dot{\gamma}_x)=\varrho(x)$, a.e., 
we find that $q_x\in\cC_{T(x)}^*(x)$ satisfies \eqref{iffLA}. Therefore,
\begin{align*}
\sA_V(q_x)&=\sL_V(\gamma_{q_x})=\int_0^1\sqrt{2V(\gamma_x(\sigma(T(x)s)))}|\dot{\gamma}_x(\sigma(T(x)s))|\cdot\frac{\rd}{\rd s}\sigma(T(x)s)\rd s=\sL_V(\gamma_x).
\end{align*}
We have re-proved the following result:

\begin{lem}\label{lemJac}
Let $x\in\RR^3\setminus\{0\}$. Then $\varrho(x)=\tilde{\varrho}(x)$ and the pair $T(x)>0$ and $q_x\in\cC^\ast_{T(x)}(x)$ 
constructed in the preceding paragraph minimizes the variational problem \eqref{miniA}.
\end{lem}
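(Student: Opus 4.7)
The plan is to prove the identity $\varrho(x)=\tilde{\varrho}(x)$ by establishing both inequalities separately, with the constructed pair $(T(x),q_x)$ from the paragraph preceding the lemma providing the explicit minimizer that realizes $\tilde{\varrho}(x)$. The pointwise algebraic fact at the heart of the argument is the AM--GM inequality $\sqrt{2}ab\leq a^{2}+b^{2}/2$ for $a,b\geq 0$, with equality if and only if $\sqrt{2}a=b$; this is exactly what relates the geometric length functional $\sL_V$ to the dynamical action $\sA_V$ and dictates the equipartition condition \eqref{iffLA} that characterizes time-parametrizations turning length minimizers into action minimizers.

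For the upper bound $\varrho(x)\leq\tilde{\varrho}(x)$, I would take an arbitrary $T>0$ and $q\in\cC^\ast_T(x)$, form the reversed and rescaled path $\gamma_q(s):=q(T(1-s))$ in $\cC_1(x,0)$, and use re-parametrization invariance of $\sL_V$ together with $\sqrt{2V(q(t))}|\dot q(t)|\leq V(q(t))+|\dot q(t)|^{2}/2$ to get $\sL_V(\gamma_q)\leq\sA_V(q)$; infimizing over $q$ and $T$ yields the inequality.

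The reverse direction is Jacobi's principle and simultaneously delivers the explicit minimizer. By Lemma~\ref{lemdX} I pick a length minimizer $\gamma_x\in\cC_1(x,0)$ with the uniform speed normalization $F(\gamma_x,\dot\gamma_x)=\varrho(x)$ a.e. Because $V\geq 1$ is continuous and $\gamma_x$ is Lipschitz with compact image, the map $\tau(s):=\int_0^s\varrho(x)/(2V(\gamma_x(r)))\rd r$ defined in \eqref{defTx} is $C^1$ with uniformly positive derivative; thus its inverse $\sigma:[0,T(x)]\to[0,1]$ is Lipschitz, and $q_x(t):=\gamma_x(\sigma(T(x)-t))$ lies in $\cC^\ast_{T(x)}(x)$. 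The time change is engineered precisely so that the equipartition identity $\sqrt{2V(q_x(t))}=|\dot q_x(t)|$ holds a.e., which is the equality case of the AM--GM step. Consequently the inequality $\sL_V(\gamma_{q_x})\leq\sA_V(q_x)$ becomes an equality, and re-parametrization invariance of $\sL_V$ gives $\sA_V(q_x)=\sL_V(\gamma_{q_x})=\sL_V(\gamma_x)=\varrho(x)$, so $\tilde{\varrho}(x)\leq\sA_V(q_x)=\varrho(x)$ and $q_x$ is a minimizer of \eqref{miniA}.

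The main obstacle I expect is the regularity of the time change: one must verify that $\tau$ has a derivative bounded away from zero and infinity so that $\sigma$ is Lipschitz and $q_x$ genuinely belongs to $\cC^\ast_{T(x)}(x)$ with the equipartition identity holding a.e. rather than merely in a weak sense. This relies on three inputs that are all in hand: the minimizer $\gamma_x$ produced by Lemma~\ref{lemdX} is already Lipschitz with the uniform speed normalization, the continuity and positivity of $V$ yield upper and lower bounds on $V\circ\gamma_x$ along the compact image of $\gamma_x$, and the chain rule for Lipschitz compositions then justifies the differentiation of $q_x$ and the resulting cancellation that produces $\sqrt{2V(q_x)}=|\dot q_x|$ a.e.
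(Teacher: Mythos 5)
Your proposal is correct and takes essentially the same route as the paper: the AM--GM inequality $\sqrt{2}ab\leq a^2+b^2/2$ with its equality case gives $\varrho(x)\leq\tilde{\varrho}(x)$, and the explicit time change $\tau$ from \eqref{defTx} applied to a uniform-speed length minimizer $\gamma_x$ from Lemma~\ref{lemdX} produces a path $q_x$ satisfying the equipartition identity \eqref{iffLA}, yielding $\sA_V(q_x)=\sL_V(\gamma_x)=\varrho(x)$. The only cosmetic difference is that you describe $\sigma=\tau^{-1}$ as Lipschitz whereas the paper observes it is $C^1$ with $\inf\sigma'>0$, a slightly stronger statement that follows at once from $\tau$ being $C^1$ with $\inf\tau'>0$; this does not affect the argument.
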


The inequality stated in the next lemma can be read off from a proof in \cite{CS81} that we essentially copy here.

\begin{lem}\label{lemCSTvr}
Let $R_1> R_0>0$ and set $a(R_0):=\inf_{|z|\geq R_0}V(z)$.
Pick $x\in\RR^3$ with $|x|\geq R_1$ and let $T(x)$ be given by \eqref{defTx} 
for some $\gamma_x\in\cC_1(x,0)$ satisfying $F(\gamma_x,\dot{\gamma}_x)=\varrho(x)$, a.e. Then
\begin{align*}
T(x)&\le \bigg(\frac{1}{2a(R_0)}+\frac{1}{\sqrt{8}R_1}\cdot\max_{|z|=R_0}\varrho(z)\bigg)\varrho(x).
\end{align*}
\end{lem}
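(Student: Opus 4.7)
My plan is to split the integral
\[
T(x)=\int_0^1\frac{\varrho(x)}{2V(\gamma_x(r))}\rd r
\]
according to whether $|\gamma_x(r)|\geq R_0$ or $|\gamma_x(r)|<R_0$, and to bound the two contributions separately. On the ``outer'' set $A_1:=\{r\in[0,1]:|\gamma_x(r)|\geq R_0\}$ one has $V(\gamma_x(r))\geq a(R_0)$ by definition, and since $|A_1|\leq 1$ the outer piece contributes at most $\varrho(x)/(2a(R_0))$, which is exactly the first term on the right-hand side of the claim.

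For the ``inner'' set $A_2:=[0,1]\setminus A_1$, I would exploit the hypothesis $F(\gamma_x,\dot\gamma_x)=\varrho(x)$ a.e., i.e.\ $\sqrt{2V(\gamma_x(r))}\,|\dot\gamma_x(r)|=\varrho(x)$, to rewrite
\[
\frac{\varrho(x)}{2V(\gamma_x(r))}=\frac{|\dot\gamma_x(r)|}{\sqrt{2V(\gamma_x(r))}}\leq\frac{|\dot\gamma_x(r)|}{\sqrt{2}}\qquad(r\in A_2),
\]
where the inequality is just $V\geq 1$. Thus the inner contribution is bounded by $(1/\sqrt{2})\int_{A_2}|\dot\gamma_x(r)|\rd r$, a Euclidean arclength that I would estimate by a dynamic programming argument. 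First, the same identity gives $\sL_V(\gamma_x)=\int_0^1\varrho(x)\rd r=\varrho(x)=d(x,0)$, so $\gamma_x$ is a genuine Agmon minimizer. Setting $r_0:=\sup\{r\in[0,1]:|\gamma_x(r)|=R_0\}$, which lies in $(0,1)$ because $\gamma_x(0)=0$ and $|\gamma_x(1)|=|x|\geq R_1>R_0$, one has $A_2\subset[0,r_0]$ and $|\gamma_x(r_0)|=R_0$.

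The key step is Bellman's principle: the restriction $\gamma_x|_{[0,r_0]}$, after a Lipschitz reparametrization to $[0,1]$, is itself minimizing in $\cC_1(\gamma_x(r_0),0)$. For, otherwise, one could replace it with a strictly shorter path and splice in $\gamma_x|_{[r_0,1]}$ (after a further reparametrization) to contradict the minimality of $\sL_V(\gamma_x)=\varrho(x)$. Consequently
\[
\int_0^{r_0}\sqrt{2V(\gamma_x(r))}\,|\dot\gamma_x(r)|\rd r=\varrho(\gamma_x(r_0))\leq M:=\max_{|z|=R_0}\varrho(z),
\]
and a further application of $V\geq 1$ gives $\int_{A_2}|\dot\gamma_x|\rd r\leq M/\sqrt{2}$, whence the inner contribution is at most $M/2$. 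Finally, \eqref{lbvarrho} applied with $y=0$, $t=1$ produces $\varrho(x)\geq\sqrt{2}|x|\geq\sqrt{2}\,R_1$, and this is exactly equivalent to $M/2\leq M\varrho(x)/(\sqrt{8}\,R_1)$; summing the two contributions yields the claim. The only step requiring any real care is the dynamic programming identity, since it involves concatenation and reparametrization of Lipschitz paths; everything else is bookkeeping.
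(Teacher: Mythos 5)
Your proof is correct and follows essentially the same strategy as the paper's: split the time integral according to proximity to the ball of radius $R_0$, bound the outer piece with $V\geq a(R_0)$, invoke the dynamic programming principle to identify the inner piece's weighted arclength as $\varrho(\gamma_x(r_0))\leq\max_{|z|=R_0}\varrho(z)$, and finish with the lower bound $\varrho(x)\geq\sqrt{2}R_1$ from \eqref{lbvarrho}. The only cosmetic difference is that the paper works in the time parametrization $q_x$ and cuts the interval at the first hitting time $t_0$ of the sphere $\{|z|=R_0\}$, while you work in the arclength parametrization $\gamma_x$ and partition $[0,1]$ by the position sets $A_1,A_2$; these choices yield the identical estimate.
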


\begin{proof} 
We write $\sA:=\sA_V$ for short. Set $t_0:=\min\{t>0|\,|q_x(t)|=R_0\}$ and $y:=q_x(t_0)$, 
where $q_x$ is the minimizer constructed by means of $\gamma_x$ prior to Lemma~\ref{lemJac}. 
Then $\eta(t):=q_x(t+t_0)$, $t\in[0,T(x)-t_0]$, defines a minimizer for $\inf\{\sA(q)|\,q\in\cC_T^*(y),\,T>0\}$. 
For, if there were $T_0>0$ and $\eta_0\in\cC_{T_0}^*(y)$ such that $\sA(\eta_0)<\sA(\eta)$, then
we could define $q_*\in \cC_{T_0+t_0}^*(x)$ setting $q_*(t):=q_x(t)$ for $t\in[0,t_0]$ and $q_*(t):=\eta_0(t-t_0)$ for $t\in[t_0,T_0+t_0]$
and obtained the contradiction $\sA(q_*)<\sA(q_x)$. Since $q_x$ satisfies \eqref{iffLA}, we deduce that
\begin{align*}
\varrho(y)&=\sA(\eta)=\int_{t_0}^{T(x)}2V(q_x(t))\rd t\geq2(T(x)-t_0),
\\
\varrho(x)&=\sA(q_x)\geq\int_0^{t_0}2V(q_x(t))\rd t\geq 2a(R_0)t_0.
\end{align*}
Combined this yields
\begin{align*}
\frac{T(x)}{\varrho(x)}&\leq \frac{t_0}{\varrho(x)}+\frac{1}{2\varrho(x)}\cdot\varrho(y)\leq \frac{1}{2a(R_0)}+\frac{1}{\sqrt{8}|x|}\cdot\max_{|z|=R_0}\varrho(z),
\end{align*}
where we also took \eqref{lbvarrho} with $\gamma_{x,0;\RR^3}=\gamma_x$ and $\gamma_{x}(1)=x$ into account.
\end{proof}


\bigskip

\noindent
{\bf Acknowledgements:}
FH is grateful for the kind hospitality at Aalborg University in May 2022 and at Polytechnic University of Milano in April 2022. 
This work has been partially done there. 
This work is also financially supported by JSPS KAKENHI 16H03942, JSPS KAKENHI 20H01808,
JSPS KAKENHI 20K20886 and JSPS KAKENHI 23K20217.
OM thanks the Research Institute for Mathematical Sciences at Kyoto University for their support and generous hospitality
during the workshop ``Mathematical Aspects of Quantum Fields and Related Topics'' in November 2024.


\begin{thebibliography}{42}

\bibitem{Agmon1982}
S.~Agmon.
\newblock {\em {Lectures on exponential decay of solutions of second-order
 elliptic equations: bounds on eigenfunctions of $N$-body Schr\"odinger operators}}.
\newblock Mathematical Notes, vol.~29, Princeton University Press, Princeton, New Jersey (1982)

\bibitem{AS82}
M.~Aizenman and B.~Simon.
\newblock {Brownian motion and Harnack's inequality for Schr\"odinger operators}.
\newblock {\em Comm. Pure Appl. Math.} \textbf{35}, 209--271 (1982)

\bibitem{Ammari2000}
Z.~Ammari.
\newblock {Asymptotic completeness for a renormalized non-relativistic
  Hamiltonian in quantum field theory: the Nelson model}.
\newblock {\em Math. Phys. Anal. Geom.} \textbf{3}, 217--285 (2000)

\bibitem{BFS1998b}
V.~Bach, J.~Fr{\"o}hlich and I.~M.~Sigal.
\newblock Quantum electrodynamics of confined nonrelativistic particles.
\newblock {\em Adv. Math.} \textbf{137}, 299--395 (1998)

\bibitem{BHLMS01}
V.~Betz, F.~Hiroshima, J.~L{\"o}rinczi, R.~A. Minlos, and H.~Spohn.
\newblock {Ground state properties of the Nelson Hamiltonian -- A Gibbs
  measure-based approach}.
\newblock {\em Rev. Math. Phys.} \textbf{14}, 173--198 (2002)


\bibitem{BHL2000}
K.~Broderix, D.~Hundertmark and H.~Leschke.
\newblock Continuity properties of Schr\"odinger semigroups with magnetic fields.
\newblock {\em Rev. Math. Phys.} \textbf{12}, 181--225 (2000) 

\bibitem{car78}
R.~Carmona.
\newblock {Pointwise bounds for Schr\"odinger eigenstates}.
\newblock {\em Commun. Math. Phys.} \textbf{62}, 97--106 (1978)

\bibitem{CMS90}
R.~Carmona, W.C.~Masters, and B.~Simon.
\newblock {Relativistic Schr\"odinger operators: Asymptotic behavior of the eigenfunctions}.
\newblock {\em J. Funct. Anal.} \textbf{91}, 117--142 (1990)

\bibitem{CS81}
R.~Carmona and B.~Simon.
\newblock {Pointwise bounds on eigenfunctions and wave packets in $N$-body quantum system}.
\newblock {\em Commun. Math. Phys.} \textbf{80}, 59--98 (1981)

\bibitem{ger00}
C.~G{\'e}rard.
\newblock {On the existence of ground states for massless Pauli-Fierz Hamiltonians}.
\newblock {\em Ann. Henri Poincar{\'e}} \textbf{1}, 443--459 (2000)

\bibitem{GiaquintaHildebrandt2}
M.~Giaquinta and S.~Hildebrandt.
\newblock {\em Calculus of variations II.}
\newblock Grundlehren der mathematischen Wissenschaften, vol.~311, Springer, Berlin-Heidelberg (1996)

\bibitem{Griesemer2004}
M.~Griesemer. {Exponential decay and ionization thresholds in non-relativistic quantum electrodynamics.}
{\em J. Funct. Anal.} \textbf{210}, 321--340 (2004) 

\bibitem{GriesemerWuensch2018}
M.~Griesemer and A.~W\"{u}nsch. On the domain of the Nelson Hamiltonian.
{\em J. Math. Phys.} \textbf{59}, 042111, 21 pages (2018) 

\bibitem{GriesemerKussmaul2024}
M.~Griesemer and V.~Ku{\ss}maul. Pointwise bounds on eigenstates in non-relativistic QED.
arXiv:2307.14986, 15 pages (2024, preprint)

\bibitem{GHL14}
M.~Gubinelli, F.~Hiroshima, and J.~L{\"o}rinczi.
\newblock {Ultraviolet renormalization of the Nelson Hamiltonian through
  functional integration}.
\newblock {\em J. Funct. Anal.} \textbf{267}, 3125--3153 (2014)

\bibitem{GMM2017}
B.~G\"{u}neysu, O.~Matte and J.~S.~M{\o}ller.
Stochastic differential equations for models of non-relativistic matter interacting
with quantized radiation fields,
{\em Probab. Theory Relat. Fields} \textbf{167}, 817--915 (2017)

\bibitem{HiHi2010}
T.~Hidaka and F.~Hiroshima. Pauli--Fierz model with Kato-class potentials and exponential decays. 
{\em Rev. Math. Phys.} \textbf{22}, 1181--1208 (2010)

\bibitem{hir14}
F.~Hiroshima.
\newblock {Functional integral approach to semi-relativistic Pauli--Fierz models}.
\newblock {\em Adv. Math.} \textbf{259}, 784--840 (2014)

\bibitem{Hiroshima2019}
F.~Hiroshima. Pointwise exponential decay of bound states of the Nelson model with Kato-class potentials. 
In: T.M.~Rassias and V.A.~Zagrebnov (Editors). {\em Analysis and operator theory}. 
Springer Optimization and Its Applications, vol~146, pages 225--250. Springer, Cham (2019)

\bibitem{HL20}
F.~Hiroshima and J.~L{\H o}rinczi.
\newblock {\em Feynman--Kac type theorems and its applications, volume 2}. Second edition,
\newblock De Gruyter, Berlin (2020)

\bibitem{HiroshimaMatte2022}
F.~Hiroshima and O.~Matte.
\newblock {Ground states and their associated Gibbs measures in the renormalized Nelson model}.
\newblock {\em Rev. Math.Phys.} \textbf{34}, 2250002, 84 pages (2022)

\bibitem{HS96}
P.D.~Hislop and I.M. Sigal.
\newblock {\em Introduction to Spectral Theory.}
\newblock Applied Mathematical Sciences, vol. 113. Springer, New York (1996)

\bibitem{KoenenbergMatte2013}
M.~K\"{o}nenberg and O.~Matte. 
On enhanced binding and related effects in the non- and semi-relativistic Pauli--Fierz models.
{\em Commun. Math. Phys.} \textbf{323}, 635--661 (2013) 

\bibitem{LampartSchmidt2019}
J.~Lampart and J.~Schmidt. On Nelson-type Hamiltonians and abstract boundary conditions.
{\em Commun. Math. Phys.} \textbf{367}, 629--663 (2019) 


\bibitem{Lions1982}
P.~L.~Lions.
\newblock {\em Generalized solutions of Hamilton-Jacobi equations.}
\newblock Research Notes in Mathematics, vol.~69, Pitman, London (1982)

\bibitem{Matte2016}
O.~Matte. Continuity properties of the semi-group and its integral kernel in non-relativistic QED.
{\em Rev. Math. Phys.} \textbf{28}, 1650011, 90 pages (2016) 

\bibitem{MatteMoeller2018}
O.~Matte and J.~S.~M{\o}ller. Feynman--Kac formulas for the ultraviolet renormalized Nelson model.
{\em Ast\'{e}risque} \textbf{404}, vi+110 pages (2018) 

\bibitem{Nelson1964}
E.~Nelson.
{Interaction of nonrelativistic particles with a quantized scalar field}.
{\em J. Math. Phys.} \textbf{5}, 1190--1197 (1964) 

\bibitem{Panati2009}
A.~Panati. Existence and nonexistence of a ground state for the massless Nelson model under 
binding condition. {\em Rep. Math. Phys.} \textbf{63}, 305--330 (2009) 

\bibitem{Posilicano2020}
A.~Posilicano. On the self-adjointness of $H+A^*+A$.
{\em Math. Phys. Anal. Geom.} \textbf{23}, Art.~37, 31 pages (2020) 

\bibitem{Schmidt2021}
J.~Schmidt. The massless Nelson Hamiltonian and its domain. In: A.~Michelangeli (Editor).
{\em Mathematical challenges of zero-range physics.},  
Rome, July 2018, Springer INdAM Series, vol.~42, pages 57--80. Springer, Cham (2021)

\bibitem{spo98}
H.~Spohn.
\newblock {Ground state of a quantum particle coupled to a scalar boson field}.
\newblock {\em Lett. Math. Phys.} \textbf{44}, 9--16 (1998)

\bibitem{var67}
S.~R.~S. Varadhan.
\newblock Diffusion processes in a small time interval.
\newblock {\em Comm. Pure Appl. Math.} \textbf{20}, 659--685 (1967)

\end{thebibliography}
\end{document}